
\documentclass[a4paper,10pt]{article}
\usepackage{amsthm,amssymb, amsmath}
\usepackage{graphicx}
\usepackage{graphics}
\usepackage{times}
\usepackage{amsmath}
\usepackage{amsfonts}
\usepackage{color}

\newtheorem{Def}{Definition}
\newtheorem{Thm}{Theorem}
\newtheorem{Lem}{Lemma}
\newcommand{\mb}{\mathbf}
\newcommand{\mat}[2][ccccccccccccccccccccccccccc]{\left( \begin{array}
{#1}#2 \\ \end{array} \right)}


\begin{document}

\title{\textbf{Rate Region of the Gaussian Scalar-Help-Vector Source-Coding Problem}}


\author{Md Saifur Rahman and Aaron B. Wagner$^{\footnote{Both authors are with the School of Electrical and Computer Engineering, Cornell University, Ithaca, NY 14853 USA. (Email: mr534@cornell.edu, wagner@ece.cornell.edu.)}}$}

\maketitle

\begin{abstract}
We determine the rate region of the Gaussian scalar-help-vector source-coding problem under a covariance matrix distortion constraint.  The rate region is achieved by a Gaussian achievable scheme. We introduce a novel outer bounding technique to establish the converse of the main result. Our approach is based on lower bounding the problem with a potentially reduced dimensional problem by projecting the main source and imposing the distortion constraint in certain directions determined by the optimal Gaussian scheme. We also prove several properties that the optimal solution to the point-to-point rate-distortion problem for a vector Gaussian source under a covariance matrix distortion constraint satisfies. These properties play an important role in our converse proof. We further establish an outer bound to the rate region of the more general problem in which there are distortion constraints on both the sources. The outer bound is partially tight in general. We also study its tightness in some nontrivial cases.
\end{abstract}

\textbf{Keywords:} mutiterminal source coding, one-helper problem, covariance matrix distortion constraint, vector Gaussian sources, vector quantization, dimension reduction.

\section{Introduction}

We consider the Gaussian scalar-help-vector source-coding problem (sometimes referred to as the one-helper problem). The setup of the problem is shown in Fig. 1. The first encoder observes a vector Gaussian source that is correlated with a scalar Gaussian source observed by the second encoder. The encoders separately send messages about their observations to the decoder at rates $R_1$ and $R_2$, respectively. The decoder uses both messages to estimate the first encoder's observations such that a certain distortion constraint on the average error covariance matrix of the estimate is satisfied. The goal is to determine the rate region of the problem, which is the set of all rate pairs $(R_1,R_2)$ that allow us to satisfy the distortion constraint.

\begin{figure}[htp]
\centering
  \includegraphics[width=3.0in]{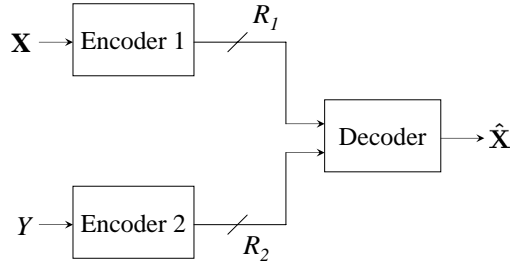}
\caption{The Gaussian scalar-help-vector source-coding problem}\label{fig:Fig1}
\end{figure}

Multiterminal source coding have received considerable attention. The study started with the work by Slepian and Wolf \cite{Slepian}. They considered a lossless problem in which two correlated memoryless sources must be reproduced by the decoder with arbitrarily small error probability. It immediately prompted researchers to extend Slepian and Wolf's result to lossy problems where the sources must be reconstructed with average distortions no more than a fixed amount. Wyner and Ziv \cite{WynerZiv} were the first who studied the lossy problem where the decoder reconstructs a source within allowable distortion with the help of side information about the source. Since then, the problem has been extended in several directions by Berger \cite{Berger}, Tung \cite{Tung}, Berger \emph{et al}. \cite{Zhang}, Oohama \cite{Oohama}, Viswanathan and Berger \cite{Viswanathan}, Wagner \emph{et al.} \cite{Wagner}, Liu and Viswanath \cite{Lui} and several others.

Oohama \cite{Oohama} studied the one-helper problem when both the sources are scalar Gaussian and gave a complete characterization of the rate region. He used the conditional entropy power inequality (EPI), which was also used by Bergmans to determine the capacity region of the scalar Gaussian degraded broadcast channel \cite{Bergmans}, to establish the converse for the one-helper problem. Oohama's proof suggests a connection between these two problems. Recently, Weingarten \emph{et al.} \cite{Wein} extended Bergmans's result to the vector case and determined the capacity region of the vector Gaussian multiple-input-multiple-output broadcast channel by introducing the idea of enhancement. It is natural to expect the enhancement idea to be useful for the vector extension of Oohama's result. In this context, Liu and Viswanath \cite{Lui} studied the one-helper problem when both the sources are vector Gaussian and the distortion constraint is on the average error covariance matrix. They combined Oohama's converse arguments with Weingarten's enhancement idea to obtain a lower bound. However, their lower bound is not tight in general because the steps in Oohama's converse proof are in general not tight if the sources are vector Gaussian. In particular, the distortion constraint is not met with equality in general. In our earlier work \cite{Rahman}, we used the enhancement idea with an improved outer bounding technique to characterize a portion of boundary of the rate region for the vector Gaussian one-helper problem. However, the complete characterization of the rate region remains unknown.

\begin{figure}[htp]
\centering
  \includegraphics[width=3.0in]{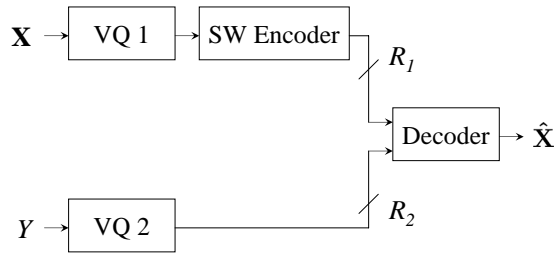}
\caption{A Gaussian achievable scheme}\label{fig:Fig2}
\end{figure}

We consider the simplest version of the problem that cannot be solved using existing techniques, namely that in which the primary source is a vector and the helper's observation is a scalar. For this Gaussian scalar-help-vector source-coding problem, we completely determine the rate region using a novel outer bounding technique. We find the optimal Gaussian solution and determine the set of directions in which this scheme meets the distortion constraint with equality. This set of directions is used to define a potentially reduced dimensional problem that lower bounds the original problem. We then proceed as Oohama did to obtain a lower bound to the reduced dimensional problem. The lower bound thus obtained is achieved by the Gaussian achievable scheme depicted in Fig. 2. In this scheme, the first encoder vector quantizes (VQ) its observation using a Gaussian test channel as in point-to-point rate-distortion theory. It then compresses the quantized values using Slepian-Wolf encoding. The second encoder just vector quantizes its observation using another Gaussian test channel. The decoder decodes the quantized values and estimates the observations of the first encoder using a minimum mean-squared error (MMSE) estimator.

We then study the more general problem in which there are distortion constraints on both the sources. Oohama \cite{Oohama} and Wagner \emph{et al.} \cite{Wagner} studied the scalar version of the problem and their work together characterizes the rate region completely. Following Oohama's technique \cite{Oohama}, one can obtain an outer bound to the rate region of the general source-coding by considering two one-helper relaxations, each of which can be solved. A portion of the resulting outer bound is tight in general, but the rest of it is fairly loose. We obtain an improved outer bound and give sufficient conditions for its tightness. Our approach is based on splitting the first encoder's rate into two parts. The first part of the rate is used to communicate information about the sufficient statistic for $Y$ given $\mb{X}$, and the second part is used to communicate information about the remaining components of $\mb{X}$. The first component is analyzed using results for the scalar problem with two distortion constraints
\cite{Oohama,Wagner}. The second part is analyzed using point-to-point rate-distortion theory.

Another contribution of this paper is that we establish several properties that the optimal solution to the point-to-point vector Gaussian source coding problem satisfies. The core optimization problem here is to maximize the $\log |\cdot|$ function over a set of positive semidefinite matrices that are no more than two positive definite matrices in a positive semidefinite sense. Since this is a convex optimization problem, its optimal solution must satisfy the necessary and sufficient Karush-Kuhn-Tucker (KKT) conditions \cite{Boyd}. By analyzing the KKT conditions, we arrive at several interesting properties that the optimal solution satisfies. These properties are used to prove the converse of our main result,
and they could prove useful elsewhere, even outside network information theory.

The rest of the paper is organized as follows. Section 2 explains the notations used in the paper. In Section 3, we present the mathematical formulation of the problem and summarize our main results. Section 4 is devoted to study the core optimization problem in the point-to-point rate-distortion theory for vector Gaussian sources under covariance matrix distortion constraint. In Section 5, we give the converse proof on our main result. Finally in Section 6, we study the generalization of problem in which there are distortion constraints on both the sources.

\section{Notations}
We use uppercase to denote random variables and vectors. Boldface is used to distinguish vectors from scalars. Arbitrary realizations of random variables and vectors are denoted in lowercase. Let $\left \{{\mb{X}_i} \right \}_{i=1}^{n}$ be an independent and identically distributed (i.i.d.) random process of random vectors. We use $\mb{X}^n$ to denote $\left \{{\mb{X}_i} \right \}_{i=1}^{n}$, and $\mb{x}^n$ to denote an arbitrary realization $\left \{{\mb{x}_i} \right \}_{i=1}^{n}$. The superscript $T$ denotes matrix transpose. We use $\sigma^2_Y$ and $\sigma^2_{Y|V}$ to denote the variance of $Y$ and the conditional variance of $Y$ given $V$, respectively. The covariance matrix of $\mb{X}$ is denoted by $\mb{K_X}$. The conditional covariance matrix of $\mb{X}$ given $\mb{Y}$ is denoted by $\mb{K}_{\mb{X}|\mb{Y}}$, and is defined as
\[
\mb{K}_{\mb{X}|\mb{Y}} = E \left [ \left (\mb{X} - E(\mb{X}|\mb{Y}) \right ) \left (\mb{X} - E(\mb{X}|\mb{Y}) \right )^T \right ].
\]
All vectors are column vectors, and are $m$-dimensional, unless otherwise stated. We use $\mb{I}_m$ to denote an $m \times m$ identity matrix. With a little abuse of notation, $\mb{0}$ is used to denote both zero vectors and zero matrices of any dimensions. For two positive semidefinite matrices $\mb{A}$ and $\mb{B}$, $\mb{A} \succcurlyeq \mb{B}$ ($\mb{A} \succ \mb{B}$) means that $\mb{A-B}$ is positive semidefinite (definite). Similarly, $\mb{A} \preccurlyeq \mb{B}$ ($\mb{A} \prec \mb{B}$) means that $\mb{B-A}$ is positive semidefinite (definite). All logarithms in this paper are to the base 2.

\section{Problem Formulation and Main Results}

\subsection{The Gaussian Scalar-Help-Vector Source-Coding Problem}
Let $\left \{(\mb{X}_i,Y_i) \right  \}_{i=1}^{n}$ be a sequence of i.i.d. zero-mean Gaussian random vectors. At each time $i$, $\mb{X}_i$ and $Y_i$ are jointly Gaussian with the covariance matrix $\mb{K_X}$ and the variance $\sigma^2_Y$, respectively. Without loss of generality, we can write
\begin{align*}
\mb{X}_i= \mb{a}Y_i+\mb{N}_i,
\end{align*}
where $\mb{a}$ is a vector and $\mb{N}_i$ is a zero-mean Gaussian random vector with the covariance matrix $\mb{K_{N}}$. Assume further that $Y_i$ is independent of $\mb{N}_i$. Therefore,
\[
\mb{K_X} = \mb{a} \mb{a}^T \sigma^2_Y + \mb{K_N}.
\]

The first encoder observes $\mb{X}^n$ and sends a message to the decoder using an encoding function
\begin{align*}
f_1^{(n)} : \mathbb{R}^{mn} \mapsto \left \{1,\dots,M_1^{(n)} \right \}.
\end{align*}
Analogously, the second encoder observes $Y^n$ and sends a message to the decoder using another encoding function
\begin{align*}
f_2^{(n)}  : \mathbb{R}^{n} \mapsto \left \{1,\dots,M_2^{(n)} \right \}.
\end{align*}
The decoder uses both received messages to estimate $\mb{X}^n$ using a decoding function
\begin{align*}
g^{(n)}  : \left \{1,\dots,M_1^{(n)} \right \} \times \left \{1,\dots,M_2^{(n)} \right \} \mapsto \mathbb{R}^{mn}.
\end{align*}
\begin{Def}
A rate-distortion vector $\left (R_1,R_2,\mb{D} \right )$, where $\mb{D}$ is a positive definite matrix, is \emph{achievable} for the Gaussian scalar-help-vector source-coding problem if there exists a block length $n$, encoders $f_1^{(n)} $ and $f_2^{(n)} $, and a decoder $g^{(n)} $ such that
\begin{align*}
R_i &\ge \frac{1}{n} \log M_i^{(n)}  \hspace {0.15 cm} \textrm{for all} \hspace {0.15 cm} i \in \{1,2\}, \hspace {0.15 cm} \textrm{and}\\
\mb{D} &\succcurlyeq \frac{1}{n} \sum_{i=1}^n E \left [ \left (\mb{X}_i - \hat {\mb{X}}_i \right ) \left (\mb{X}_i - \hat {\mb{X}}_i \right )^T \right ],
\end{align*}
where
\begin{align*}
\hat {\mb{X}}^n = g^{(n)} \left (f_1^{(n)} \left (\mb{X}^n \right ), f_2^{(n)}\left (Y^n \right ) \right ).
\end{align*}
Let $\hat {\mathcal{R}}$ be the closure of the set of all achievable rate-distortion vectors. Define
\[
\mathcal{R}\left (\mb{D} \right) = \left \{(R_1,R_2): (R_1,R_2,\mb{D}) \in \hat {\mathcal{R}} \right \}.
\]
We call $\mathcal{R} (\mb{D})$ the \emph{rate region} for the Gaussian scalar-help-vector source-coding problem.
\end{Def}
Since we are interested in the covariance matrix distortion constraint, without loss of generality we can restrict the decoding function to be the MMSE estimate of $\mb{X}^n$ based on the received messages. Therefore, $\hat {\mb{X}}^n$ can be written as
\begin{align*}
\hat {\mb{X}}^n = E \left [\mb{X}^n | f_1^{(n)} \left (\mb{X}^n \right ), f_2^{(n)}\left (Y^n \right )\right ].
\end{align*}
If $\mb{a} = 0$, then the problem reduces to the point-to-point vector Gaussian rate-distortion problem which can be solved using existing techniques. Therefore, we will assume that $\mb{a} \neq 0$ in the rest of the paper. We will assume further that $\mb{K_X}$ is strictly positive definite, since the case when $\mb{K_X}$ is singular can be handled by defining an equivalent problem in which the source covariance is strictly positive definite. We can do so by applying an invertible transformation. The details of the transformation is presented in Appendix A.
\subsection{Rate Region}
Let us define the following set
\begin{align*}
\mathcal{R}^{*}(\mb{D}) = \{(R_1,R_2) : R_1 \ge  \min_{\mb{K}} \hspace {0.1in} &\frac{1}{2} \log \frac{ \left |\mb{a} \mb{a}^{T} \sigma_{Y}^2 2^{-2 R_2} + \mb{K_{N}} \right |}{ \left|\mb{K} \right|} \\
\textrm{s. t.} \hspace{0.1in} &\mb{0} \preccurlyeq \mb{K} \preccurlyeq \mb{D}\\
&\mb{K} \preccurlyeq \mb{a} \mb{a}^{T} \sigma_{Y}^2 2^{-2 R_2} + \mb{K_N} \}.
\end{align*}
We then have the following theorem.
\begin{Thm}
For every positive definite matrix $\mb{D}$
\begin{align*}
\mathcal{R}(\mb{D}) = \mathcal{R}^{*}(\mb{D}).
\end{align*}
\end{Thm}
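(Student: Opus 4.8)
The plan is to establish the two inclusions $\mathcal{R}(\mb{D}) \supseteq \mathcal{R}^*(\mb{D})$ (achievability) and $\mathcal{R}(\mb{D}) \subseteq \mathcal{R}^*(\mb{D})$ (converse) separately, with the bulk of the work lying in the converse. For achievability, I would analyze the Gaussian scheme of Fig.~2: the second encoder vector-quantizes $Y$ through a scalar Gaussian test channel $V = Y + Z$, so that at rate $R_2$ the decoder effectively has access to a degraded observation of $Y$ with $\sigma^2_{Y|V} = \sigma^2_Y 2^{-2R_2}$ (after optimizing the test channel noise). Conditioned on $V$, the source $\mb{X}$ still has the structure $\mb{X} = \mb{a}Y + \mb{N}$, so its conditional covariance becomes $\mb{a}\mb{a}^T \sigma^2_Y 2^{-2R_2} + \mb{K_N}$. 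The first encoder then performs Gaussian vector quantization $\mb{U} = \mb{X} + \mb{W}$ followed by Slepian--Wolf binning against the decoder's side information $V$; the Wyner--Ziv rate for the first encoder is $I(\mb{X};\mb{U}) - I(V;\mb{U})$, and a standard computation shows this equals $\tfrac12 \log \frac{|\mb{a}\mb{a}^T \sigma^2_Y 2^{-2R_2} + \mb{K_N}|}{|\mb{K}|}$ where $\mb{K}$ is the resulting MMSE error covariance of $\mb{X}$ given $(\mb{U},V)$. Ranging over admissible test channels realizes exactly the set over which the minimum in $\mathcal{R}^*(\mb{D})$ is taken, subject to $\mb{0} \preccurlyeq \mb{K} \preccurlyeq \mb{D}$ and $\mb{K} \preccurlyeq \mb{a}\mb{a}^T \sigma^2_Y 2^{-2R_2} + \mb{K_N}$.

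For the converse, I would first solve the inner minimization in the definition of $\mathcal{R}^*(\mb{D})$: fixing $R_2$, minimizing $-\log|\mb{K}|$ over $\mb{0} \preccurlyeq \mb{K} \preccurlyeq \mb{D}$ and $\mb{K} \preccurlyeq \mb{a}\mb{a}^T\sigma^2_Y 2^{-2R_2} + \mb{K_N}$ is precisely the core optimization problem studied in Section~4 of the paper, so I would invoke the structural KKT-derived properties proved there. The key idea, as the introduction advertises, is dimension reduction: let $\mb{K}^*$ be the optimal solution at the rate pair $(R_1, R_2)$ under consideration, and identify the subspace of directions along which the distortion constraint $\mb{K}^* \preccurlyeq \mb{D}$ is active (i.e.\ $(\mb{D} - \mb{K}^*)$ is singular). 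I would project $\mb{X}$ onto an appropriate complement and impose the distortion constraint only in those active directions, obtaining a reduced-dimensional scalar-help-vector problem whose rate region lower-bounds $\mathcal{R}(\mb{D})$. In the reduced problem, the optimal Gaussian scheme meets the distortion constraint with equality in every direction, which is exactly the regime in which Oohama's converse technique is tight.

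The heart of the converse is then the Oohama-style argument applied to the reduced problem. Given any $(n,f_1^{(n)},f_2^{(n)},g^{(n)})$ code achieving distortion $\mb{D}$, set $J = f_1^{(n)}(\mb{X}^n)$ and $W = f_2^{(n)}(Y^n)$. One bounds $nR_2 \ge H(W) \ge I(Y^n;W)$ and, via the (conditional) entropy power inequality relating $h(\mb{X}^n|W)$ to $h(Y^n|W)$ through $\mb{X} = \mb{a}Y + \mb{N}$, converts the constraint on $R_2$ into a statement that, conditioned on $W$, the source $\mb{X}^n$ behaves like one with per-letter conditional covariance dominated by $\mb{a}\mb{a}^T\sigma^2_Y 2^{-2R_2} + \mb{K_N}$ in the relevant directions. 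Combining this with $nR_1 \ge I(\mb{X}^n;J|W)$, using that $h(\mb{X}^n|J,W)$ is controlled by the distortion matrix $\mb{D}$ via the maximum-entropy bound, and applying a matrix Fano/convexity argument together with the Section~4 properties yields $R_1 \ge \tfrac12 \log \frac{|\mb{a}\mb{a}^T\sigma^2_Y 2^{-2R_2} + \mb{K_N}|}{|\mb{K}|}$ for some feasible $\mb{K}$. The main obstacle I anticipate is the dimension-reduction step: one must argue carefully that projecting away the ``inactive'' directions genuinely yields a valid lower bound (the decoder for the reduced problem must be constructible from the original one, and the distortion constraint must transfer), and that the optimal $\mb{K}^*$ in the reduced problem satisfies $\mb{K}^* \prec \mb{D}'$ is \emph{not} required---rather, that equality holds so the EPI steps lose nothing. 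Establishing that the reduced problem's solution coincides with the original $\mathcal{R}^*(\mb{D})$ value, using the KKT structure, is where the Section~4 properties do their essential work.
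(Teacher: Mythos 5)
Your achievability sketch and your identification of the dimension-reduction idea both match the paper's approach. The gap is in the ``heart of the converse.'' You propose to apply the conditional entropy power inequality directly to $h(\mb{X}^n\,|\,W)$ using the relation $\mb{X} = \mb{a}Y + \mb{N}$, converting the $R_2$ constraint into a per-letter covariance bound and then closing the argument with a matrix Fano/max-entropy step. This is essentially the Liu--Viswanath route that the paper explicitly identifies as not tight in general, and projecting onto the active directions $\mb{S}$ does not rescue it: even in the reduced problem one has $\mb{S}^T\mb{X} = (\mb{S}^T\mb{a})Y + \mb{S}^T\mb{N}$ with $Y$ scalar, so the component $(\mb{S}^T\mb{a})Y$ is rank-one and the vector EPI applied to $h(\mb{S}^T\mb{X}^n\,|\,W)$ degenerates (the ``signal'' term has $-\infty$ differential entropy). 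There is no vector EPI step, tight or otherwise, in the paper's converse.

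What the paper actually does, and what your proposal is missing, is a further decomposition inside the reduced problem. After establishing a single-letter outer bound (Lemma~2) and moving entirely to single-letter optimization, the reduced objective is split as $I(\mb{S}^T\mb{X};U|V) = I(\mb{S}^T\mb{X};U,V) - I(\mb{S}^T\mb{X};V)$, and the two terms are bounded against \emph{separate} relaxed subproblems. The first, $I(\mb{S}^T\mb{X};U,V)$ under $\mb{S}^T\mb{K}_{\mb{X}|U,V}\mb{S} \preccurlyeq \mb{S}^T\mb{D}\mb{S}$, is pure point-to-point rate-distortion and needs only the max-entropy bound, no EPI. The second, $\max I(\mb{S}^T\mb{X};V)$ under $R_2 \ge I(Y;V)$ and the Markov chain $\mb{S}^T\mb{X}\leftrightarrow Y\leftrightarrow V$, collapses to a \emph{scalar} problem because $Y$ is scalar: only the one-dimensional sufficient statistic of $\mb{S}^T\mb{X}$ for $Y$ carries any information through $V$, so Oohama's scalar EPI argument applies verbatim. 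It is precisely this scalar reduction that makes the EPI step lossless, and it is enabled by the decomposition, not merely by projecting onto $\mb{S}$. Finally, the two subproblem optima are reassembled using the $\mb{K}^*_{\mb{Z}|U}$-orthogonality and $\mb{D}$-/$\mb{K_Z}$-orthogonality structure of $[\mb{S},\mb{T}]$ from Theorem~2 to recover $v(P_G)$ exactly; you gesture at ``Section~4 properties'' but do not identify that this bookkeeping (equations (39)--(42) and Lemmas 3--5) is what closes the loop. You would also need the analogue of Lemma~3, showing the optimal Gaussian helper uses $\sigma^2_{Y|V^*} = \sigma^2_Y 2^{-2R_2}$, which in turn requires arguing that the long Markov chain $U\leftrightarrow\mb{X}\leftrightarrow Y\leftrightarrow V$ can be assumed without loss of generality for the Gaussian-restricted problem; that argument again leans on the Theorem~2 orthogonality facts.
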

\subsection{A Gaussian Achievable Scheme} There is a natural \emph{Gaussian achievable scheme} depicted in Fig. 2 that is optimal for the problem described in the previous section. Similar schemes have been shown to be optimal for other important problems in Gaussian multiterminal source coding \cite{Oohama, Wagner, Wagner1, Wang, Berger, Zhang, Viswanathan}. We present an overview of the scheme here. The details of the scheme can be found in \cite{Zhang, Oohama}.

Let $\mathcal{S}(U,V)$ be the set of zero-mean jointly Gaussian random variables $U$ and $V$ such that
\begin{enumerate}
\item[(a)] $U, \mb{X}$, $Y$, and $V$ form a Markov chain $U \leftrightarrow \mb{X} \leftrightarrow Y \leftrightarrow V$, and
\item[(b)] $\mb{K}_{\mb{X}|U,V} \preccurlyeq \mb{D}$.
\end{enumerate}
Consider any $(U,V) \in \mathcal{S(U,V)}$ and a large block length $n$. Let $R_1^{'} = I(\mb{X};U)+\epsilon$, where $\epsilon > 0$. To construct the codebook for the first encoder, first generate $2^{n R_1^{'}}$ independent codewords $U^n$ randomly according to the marginal distribution of $U$, and then uniformly distribute them into $2^{n R_1}$ bins. The second encoder's codebook is constructed by generating $2^{n R_2}$ independent codewords $V^n$ randomly according to the marginal distribution of $V$.

Given a source sequence $\mb{X}^n$, the first encoder looks for a codeword $U^n$ that is jointly typical with $\mb{X}^n$, and sends the index $b$ of the bin in which $U^n$ belongs. The second receiver upon receiving $Y^n$, sends the index of the codeword $V^n$ that is jointly typical with $Y^n$. The decoder receives the two indices, then looks into the bin $b$ for a codeword $U^n$ that is jointly typical with $V^n$. The decoder can recover $U^n$ and $V^n$ with high probability as long as
\begin{align*}
R_1 &\ge I(\mb{X};U|V)\\
R_2 &\ge I(Y;V).
\end{align*}
The decoder then computes the MMSE estimate of the source $\mb{X}^n$ given the messages $U^n$ and $V^n$, and $(b)$ above guarantees that this estimate will satisfy the covariance matrix distortion constraint. Let
\begin{align*}
\mathcal{R}_G(\mb{D}) = \{(R_1,R_2) &: \hspace{0.05in} \textrm{there exists}\hspace{0.05in} (U,V) \in \mathcal{S}(U,V) \hspace{0.05in}\textrm{such that} \\
R_1 &\ge I(\mb{X};U|V) \\
R_2 &\ge I(Y;V)\}.
\end{align*}
The following lemma gives the achievable rate region by using this scheme.
\begin{Lem}
The \emph{Gaussian achievable scheme} achieves $\mathcal{R}_G(\mb{D})$, which satisfies
\[
\mathcal{R}_G(\mb{D}) = \mathcal{R}^{*}(\mb{D}).
\]
\end{Lem}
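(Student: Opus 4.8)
The lemma has two parts --- that the scheme of Fig.~2 achieves $\mathcal{R}_G(\mb{D})$, and that $\mathcal{R}_G(\mb{D})=\mathcal{R}^{*}(\mb{D})$ --- and I would treat them separately. For the achievability part the plan is to run the standard Berger--Tung analysis behind the scheme already outlined in the text (cf.\ \cite{Berger,Tung,Zhang,Oohama}): with the stated codebook and binning, the decoder recovers $(U^n,V^n)$ with high probability whenever $R_1\ge I(\mb{X};U|V)$ and $R_2\ge I(Y;V)$, and on the jointly typical event the per-letter error covariance of $E[\mb{X}^n\mid U^n,V^n]$ is arbitrarily close to $\mb{K}_{\mb{X}|U,V}\preccurlyeq\mb{D}$. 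The only delicate point is the contribution of the atypical event; since it is enough to use the MMSE decoder, the per-letter error covariance is always $\preccurlyeq\mb{K_X}$, so the atypical event contributes at most $\epsilon\,\mb{K_X}\to\mb{0}$ and the covariance constraint holds after taking the closure. This gives $(R_1,R_2,\mb{D})\in\hat{\mathcal{R}}$ for all $(R_1,R_2)\in\mathcal{R}_G(\mb{D})$.

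The identity $\mathcal{R}_G(\mb{D})=\mathcal{R}^{*}(\mb{D})$ is a deterministic computation. First I would record that, for jointly Gaussian $(U,\mb{X},Y,V)$ with $U\leftrightarrow\mb{X}\leftrightarrow Y\leftrightarrow V$, writing $\mb{N}=\mb{X}-\mb{a}Y$ one has $\mb{N}$ independent of $(Y,V)$, hence $\mb{K}_{\mb{X}|V}=\mb{a}\mb{a}^{T}\sigma^{2}_{Y|V}+\mb{K_N}$, $I(Y;V)=\tfrac12\log\bigl(\sigma^{2}_{Y}/\sigma^{2}_{Y|V}\bigr)$, and $I(\mb{X};U|V)=\tfrac12\log\bigl(|\mb{K}_{\mb{X}|V}|/|\mb{K}_{\mb{X}|U,V}|\bigr)$ (all conditionals being Gaussian). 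Writing $\mb{\Sigma}_{r}:=\mb{a}\mb{a}^{T}r+\mb{K_N}$ and $s:=\sigma^{2}_{Y}2^{-2R_2}$, the matrix appearing in the definition of $\mathcal{R}^{*}(\mb{D})$ is $\mb{\Sigma}_{s}$. For $\mathcal{R}^{*}(\mb{D})\subseteq\mathcal{R}_G(\mb{D})$ I would take $(R_1,R_2)\in\mathcal{R}^{*}(\mb{D})$, let $\mb{K}$ attain the minimum in its definition (the minimizer has $|\mb{K}|>0$ because $\epsilon\mb{I}$ is feasible), pick a Gaussian $V$ that is $Y$ plus independent noise with $\sigma^{2}_{Y|V}=s$ (so $I(Y;V)=R_2$ and $\mb{K}_{\mb{X}|V}=\mb{\Sigma}_{s}\succ\mb{0}$), and then take $U=\mb{X}+\mb{Z}$ with $\mb{Z}$ independent of $(\mb{X},Y,V)$ of covariance $(\mb{K}^{-1}-\mb{\Sigma}_{s}^{-1})^{-1}$ (the routine Gaussian test-channel construction, using $\mb{0}\prec\mb{K}\preccurlyeq\mb{\Sigma}_{s}$, with $U$ carrying no information in directions where $\mb{K}$ and $\mb{\Sigma}_{s}$ coincide). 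Then $(U,V)\in\mathcal{S}(U,V)$ and $I(\mb{X};U|V)=\tfrac12\log(|\mb{\Sigma}_{s}|/|\mb{K}|)\le R_1$, so $(R_1,R_2)\in\mathcal{R}_G(\mb{D})$.

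The reverse inclusion $\mathcal{R}_G(\mb{D})\subseteq\mathcal{R}^{*}(\mb{D})$ is the substantive step. Given $(U,V)\in\mathcal{S}(U,V)$ I would set $t:=\sigma^{2}_{Y|V}$ and $\mb{K}:=\mb{K}_{\mb{X}|U,V}$, so that $t\ge s$, $\mb{0}\preccurlyeq\mb{K}\preccurlyeq\mb{D}$, $\mb{K}\preccurlyeq\mb{\Sigma}_{t}$, and $R_1\ge\tfrac12\log(|\mb{\Sigma}_{t}|/|\mb{K}|)$; I may assume $\mb{K}\succ\mb{0}$, otherwise $R_1=\infty$ and there is nothing to show. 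The obstacle is that $\mb{K}$ need not be $\preccurlyeq\mb{\Sigma}_{s}$, so it is not directly feasible for the minimization defining $\mathcal{R}^{*}(\mb{D})$ --- in effect one must show that giving the helper less than its full rate $R_2$ is never advantageous. My plan is to shrink $\mb{K}$ along the single direction in which $\mb{\Sigma}_{t}=\mb{\Sigma}_{s}+(t-s)\mb{a}\mb{a}^{T}$ exceeds $\mb{\Sigma}_{s}$: with $\mb{w}:=\mb{\Sigma}_{s}^{-1}\mb{a}$ and $\lambda:=(t-s)/\bigl(1+(t-s)\,\mb{a}^{T}\mb{\Sigma}_{s}^{-1}\mb{a}\bigr)\ge0$, set $\mb{K}':=(\mb{K}^{-1}+\lambda\mb{w}\mb{w}^{T})^{-1}$. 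Then $(\mb{K}')^{-1}\succcurlyeq\mb{K}^{-1}$ gives $\mb{K}'\preccurlyeq\mb{K}\preccurlyeq\mb{D}$; the Sherman--Morrison formula applied to $\mb{\Sigma}_{t}^{-1}$ together with $\mb{K}^{-1}\succcurlyeq\mb{\Sigma}_{t}^{-1}$ gives $(\mb{K}')^{-1}\succcurlyeq\mb{\Sigma}_{s}^{-1}$, i.e.\ $\mb{K}'\preccurlyeq\mb{\Sigma}_{s}$; and the matrix determinant lemma with $\mb{K}\preccurlyeq\mb{\Sigma}_{t}$ gives $|\mb{\Sigma}_{s}|/|\mb{K}'|\le|\mb{\Sigma}_{t}|/|\mb{K}|$. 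Hence $R_1\ge\tfrac12\log(|\mb{\Sigma}_{s}|/|\mb{K}'|)$ with $\mb{K}'$ now feasible, so $(R_1,R_2)\in\mathcal{R}^{*}(\mb{D})$. I expect this rank-one shrinkage to be the main difficulty; once it is in hand the remainder is bookkeeping, with degenerate cases (e.g.\ $\mb{K_N}$, hence $\mb{\Sigma}_{s}$, singular as $R_2\to\infty$) handled by the invertible reduction of Appendix~A together with a continuity argument.
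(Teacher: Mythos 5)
Your proof is correct, and the key step --- the rank-one shrinkage --- is a genuinely different argument from the paper's. The paper proves $\mathcal{R}_G(\mb{D})=\mathcal{R}^{*}(\mb{D})$ by passing through the Gaussian-restricted problem $P_G$ of Section~5 and Lemma~3 (``the helper should use its full rate''), whose proof in Appendix~C relies on the $(\mb{S},\mb{T})$ basis from Theorem~2, a reduction to the long Markov chain, and a mutual-information contradiction. You prove the same monotonicity directly in matrix form: writing $\mb{\Sigma}_{s}^{-1}=\mb{\Sigma}_{t}^{-1}+\lambda\mb{w}\mb{w}^{T}$ by Sherman--Morrison, the update $\mb{K}\mapsto\mb{K}'=(\mb{K}^{-1}+\lambda\mb{w}\mb{w}^{T})^{-1}$ preserves feasibility (it shrinks $\mb{K}$, hence $\mb{K}'\preccurlyeq\mb{D}$, and gives $(\mb{K}')^{-1}\succcurlyeq\mb{\Sigma}_{t}^{-1}+\lambda\mb{w}\mb{w}^{T}=\mb{\Sigma}_{s}^{-1}$), while
\[
\frac{|\mb{\Sigma}_{s}|}{|\mb{K}'|}=\frac{|\mb{K}^{-1}+\lambda\mb{w}\mb{w}^{T}|}{|\mb{\Sigma}_{t}^{-1}+\lambda\mb{w}\mb{w}^{T}|}=\frac{|\mb{\Sigma}_{t}|}{|\mb{K}|}\cdot\frac{1+\lambda\mb{w}^{T}\mb{K}\mb{w}}{1+\lambda\mb{w}^{T}\mb{\Sigma}_{t}\mb{w}}\le\frac{|\mb{\Sigma}_{t}|}{|\mb{K}|},
\]
since $\mb{K}\preccurlyeq\mb{\Sigma}_{t}$; I verified this chain and it is exactly right. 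In fact your $\mb{K}'$ coincides with the conditional covariance $\mb{K}_{\mb{X}|U,\tilde V}$ the paper would get in Appendix~C by holding $U$ fixed and replacing $V$ by a full-rate $\tilde V$, so the two arguments agree at the matrix level --- but yours is self-contained and purely algebraic, making Lemma~1 provable before any of the Section~4/Appendix~C machinery is developed, which is a genuine simplification for this particular statement. The achievability half and the inclusion $\mathcal{R}^{*}(\mb{D})\subseteq\mathcal{R}_G(\mb{D})$ are the standard Berger--Tung and Gaussian test-channel steps the paper defers to \cite{Zhang,Oohama}. Do spell out the degenerate test channel when $\mb{K}$ touches $\mb{\Sigma}_{s}$ on a subspace (pass through $(1-\epsilon)\mb{K}$ and take closures, or restrict $U$ to the range of $\mb{\Sigma}_s^{-1}-\mb{K}^{-1}$), and likewise the $R_2\to\infty$ limit; these are standard fixes but worth a sentence.
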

It immediately follows from the discussion above that the \emph{Gaussian achievable scheme} achieves $\mathcal{R}_G(\mb{D})$. The equality in Lemma 1 is proved later in Section 5 (problem $P_G$ and Lemma 3). Theorem 1 and Lemma 1 together imply that $\mathcal{R}_G(\mb{D})$ equals the rate region $\mathcal{R}\left (\mb{D} \right)$. In particular, this proves that the \emph{Gaussian achievable scheme} depicted in Fig. 2 is optimal for this problem.

The converse proof of Theorem 1 is deferred to Section 5. In the next Section, we study an optimization problem which appears in the converse proof of our main result.

\section{The Core Optimization Problem}
In this section, we study the core optimization problem in the point-to-point rate-distortion theory for a vector Gaussian source under a covariance matrix distortion constraint. In this setup, an i.i.d. zero-mean vector Gaussian source $\mb{Z}^n$ with a covariance matrix $\mb{K_Z}$ is observed by the encoder which sends a message to the decoder over a rate-constrained channel using a function
\begin{align*}
f^{(n)}  : \mathbb{R}^{mn} \mapsto \left \{1,\dots,M^{(n)} \right \}.
\end{align*}
The decoder uses the received message to give an estimate $\hat{\mb{Z}}^n$ of the source $\mb{Z}^n$ such that
\[
\frac{1}{n} \sum_{i=1}^n E\left [ \left (\mb{Z}_i - \hat {\mb{Z}}_i \right  ) \left (\mb{Z}_i - \hat {\mb{Z}}_i \right )^T \right ]  \preccurlyeq \mb{D_Z},
\]
where $\mb{D_Z}$ is a positive definite matrix. As explained in Section 3.1, we can assume without loss of generality that $\mb{K}_{\mb{Z}}$ is strictly positive definite and
\begin{align*}
\hat{\mb{Z}}^n &= E \left [\mb{Z}^n| f^{(n)} \left ( \mb{Z}^n \right) \right].
\end{align*}

In the single-letter form, the rate-distortion function of the source $\mb{Z}^n$ is given by the optimal value of the following optimization problem
\begin{align*}
\min_{U} \hspace {0.1in} &I(\mb{Z};U) \nonumber\\
\textrm{subject to} \hspace{0.1in} &\mb{K}_{\mb{Z}|U} \preccurlyeq \mb{D_Z} \nonumber.
\end{align*}
A Gaussian $U$ is the optimal solution to this problem because of the fact that the Gaussian distribution maximizes the differential entropy for a given covariance matrix. So, we just need to optimize the above optimization problem over all Gaussian distributions. Equivalently, we have the following matrix optimization problem
\begin{align}
F \left (\mb{D_Z}, \mb{K_Z} \right ) = \hspace{0.3in}\max \hspace {0.1in} &\log \left |\mb{K}_{\mb{Z}|U} \right | \nonumber\\
\hspace{0.2in} \textrm{subject to} \hspace{0.1in} &\mb{0} \preccurlyeq \mb{K}_{\mb{Z}|U} \preccurlyeq \mb{D_Z}\\
&\mb{K}_{\mb{Z}|U}\preccurlyeq \mb{K_Z}. \nonumber
\end{align}
Note that we need to impose an additional constraint
\[
\mb{K}_{\mb{Z}|U} \preccurlyeq \mb{K_Z}
\]
because of the fact that the conditional covariance is no more than the unconditional covariance in a positive semidefinite sense. Observe that if any one of the two constraints in (1) is inactive, then the other constraint will be met with equality and we will have a close form solution to the problem. If both constraints are active, then it is unlikely to obtain a close form solution in general. However, we can establish certain properties that the optimal solution satisfies. The rest of the section is devoted to establish these properties.

Since the objective of the optimization problem (1) is continuous and
\[
\left \{\mb{0} \preccurlyeq \mb{K}_{\mb{Z}|U} \preccurlyeq \mb{D_Z} \hspace{0.1in} \text{and} \hspace{0.1in} \mb{K}_{\mb{Z}|U} \preccurlyeq \mb{K_Z} \right \}
\]
is a compact set, there exists an optimal solution $\mb{K}^{*}_{\mb{Z}|U}$ to (1). Furthermore, since (1) is a convex optimization problem, we can get the following Lagrangian formulation
\begin{align*}
\max \hspace{0.2in} \log \left |\mb{K}_{\mb{Z}|U} \right | + \textrm{Trace} \left \{\mb{K}_{\mb{Z}|U}\mb{\Lambda}  - \mb{K}_{\mb{Z}|U}\mb{M}_1 - \mb{K}_{\mb{Z}|U} \mb{M}_2 \right \},
\end{align*}
where $\mb {\Lambda}, \mb{M}_1$ and  $\mb{M}_2$ are positive semidefinite Lagrange multiplier matrices corresponding to the constraints $\mb{K}_{\mb{Z}|U} \succcurlyeq \mb{0}$, $\mb{K}_{\mb{Z}|U} \preccurlyeq \mb{D_Z}$ and $\mb{K}_{\mb{Z}|U} \preccurlyeq \mb{K_Z}$, respectively. Then $\mb{K}_{\mb{Z}|U}^{*}$ must satisfy the following necessary and sufficient KKT conditions \cite{Boyd}
\begin{align}
{\mb{K}_{\mb{Z}|U}^{*-1}} + \mb{\Lambda}^{*} - \mb{M}_1^{*} - \mb{M}_2^{*} &= \mb{0}, \\
\mb{K}_{\mb{Z}|U}^{*}\mb{\Lambda}^{*} &= \mb{0}, \\
\left (\mb{D_Z} - \mb{K}_{\mb{Z}|U}^{*} \right )\mb{M}_1^{*} &= \mb{0}, \\
\left (\mb{K_Z} - \mb{K}_{\mb{Z}|U}^{*}\right  )\mb{M}_2^{*} &= \mb{0}, \\
\mb{\Lambda}^{*}, \mb{M}_1^{*}, \mb{M}_2^{*} &\succcurlyeq \mb{0}.
\end{align}
Observe that in the optimization problem $(1)$, the constraint $\mb{K}_{\mb{Z}|U} \succcurlyeq \mb{0}$ is never active, so
\begin{align}
\mb{\Lambda}^{*}=\mb{0}.
\end{align}
Since $\log |\cdot|$ is strictly convex over the domain of positive definite matrices, $\mb{K}_{\mb{Z}|U}^{*}$ is the unique maximizer of the problem. Let
\begin{align*}
\mathcal{Q} = \left \{ \left (\mb{M}_1^{*},\mb{M}_2^{*} \right )\right \}
\end{align*}
be the set of all pairs $\left (\mb{M}_1^{*},\mb{M}_2^{*} \right )$ of Lagrange multiplier matrices that satisfy the KKT conditions. Consider any sequence of pairs $\left (\mb{M}_1^{*},\mb{M}_2^{*} \right )_n$ in $\mathcal{Q}$. By the continuity of the KKT conditions, it follows that the limit of this sequence belongs to $\mathcal{Q}$. Therefore, $\mathcal{Q}$ is a closed set. Equations (2) and (7) together imply that every pair $\left (\mb{M}_1^{*},\mb{M}_2^{*} \right )$ in $\mathcal{Q}$ is such that
\[
\mb{M}_1^{*} + \mb{M}_2^{*} = {\mb{K}_{\mb{Z}|U}^{*-1}},
\]
which is a fixed matrix. Hence, the set $\mathcal{Q}$ is bounded. We thus conclude that  $\mathcal{Q}$ is a compact set. This along with the continuity of the Trace$(\cdot)$ function imply that there exists $\left (\bar {\mb{M}}_1,\bar {\mb{M}}_2 \right )$ in $\mathcal{Q}$ that solves the optimization problem
\begin{align}
\min \hspace {0.1in} &\textrm{Trace}\left (\mb{M}_1^{*} \right )\nonumber\\
\textrm{subject to} \hspace{0.1in} &\left (\mb{M}_1^{*},\mb{M}_2^{*} \right ) \in \mathcal{Q} .
\end{align}
Since $\bar {\mb{M}}_1$ and $\bar {\mb{M}}_2$ are positive semidefinite, we can write their spectral decompositions as
\begin{align}
\bar {\mb{M}}_1 &= \sum_{i=1}^{r} \lambda_i \mb{s}_i \mb{s}_i^{T}  \\
\bar {\mb{M}}_2 &= \sum_{i=1}^{l} \gamma_i \mb{t}_i \mb{t}_i^{T},
\end{align}
where
\begin{enumerate}
\item[(a)] $0 \le r,l \le m$,
\item[(b)] $\lambda_i, \gamma_j > 0,$ for all $i \in \{1,\dots,r\}$ and for all $j \in \{1,\dots,l\}$, and
\item[(c)] $\{\mb{s}_i\}_{i=1}^{r}$ and $\{\mb{t}_i\}_{i=1}^{l}$ are sets of orthonormal vectors.
\end{enumerate}
We have from (4), (5), (9) and (10) that
\begin{align*}
\left (\mb{D_Z} - {\mb{K}_{\mb{Z}|U}^{*}} \right ) \sum_{i=1}^{r} \lambda_i \mb{s}_i \mb{s}_i^{T} &=\mb{0}, \\
\left (\mb{K_Z} - {\mb{K}_{\mb{Z}|U}^{*}} \right ) \sum_{i=1}^{l} \gamma_i \mb{t}_i \mb{t}_i^{T} &=\mb{0},
\end{align*}
which imply that
\begin{align}
\left (\mb{D_Z}  - {\mb{K}_{\mb{Z}|U}^{*}} \right ) \mb{s}_i &= \mb{0}, \hspace{0.2in}\forall i \in \left \{1,\dots,r \right \}, \\
\left (\mb{K_Z} - {\mb{K}_{\mb{Z}|U}^{*}} \right) \mb{t}_i &= \mb{0}, \hspace{0.2in}\forall i \in \left \{1,\dots,l \right \}.
\end{align}
Define the matrices
\begin{align*}
\mb{S} &= \left [\sqrt{\lambda_1}\mb{s}_1,\sqrt{\lambda_2}\mb{s}_2,\dots,\sqrt{\lambda_r}\mb{s}_r \right ] \\
\mb{T} &= \left [\sqrt{\gamma_1}\mb{t}_1,\sqrt{\gamma_2}\mb{t}_2,\dots,\sqrt{\gamma_l}\mb{t}_l \right ].
\end{align*}
Let $\mb{B}$ be an $m \times m$ positive definite matrix.
\begin{Def}
A non-zero $m \times p$ matrix $\mb{U}$ is $\mb{B}$\emph{-orthogonal} if
\begin{align*}
\mb{U}^T \mb{B} \mb{U} = \mb{I}_p.
\end{align*}
\end{Def}
\begin{Def}
A non-zero $m \times p$ matrix $\mb{U}$ and a non-zero $m \times q$ matrix $\mb{V}$ are \emph{cross} $\mb{B}$\emph{-orthogonal} if
\begin{align*}
\mb{U}^T \mb{B} \mb{V} = \mb{0}.
\end{align*}
\end{Def}
We have the following theorem about the optimal solution to the optimization problem (1).
\begin{Thm}
\begin{enumerate}
\item[(a)] If $r>0$, then $\mb{S}^T \left (\mb{K_Z} - \mb{K}_{\mb{Z}|U}^{*} \right) \mb{S}$ is strictly positive definite,
\item[(b)] $[\mb{S,T}]$ is an invertible matrix,
\item[(c)] $[\mb{S,T}]$ is $\mb{K}_{\mb{Z}|U}^{*}$\emph{-orthogonal},
\item[(d)] $\mb{S}$ is ${\mb{D_Z}}$\emph{-orthogonal},
\item[(e)] $\mb{T}$ is ${\mb{K_Z}}$\emph{-orthogonal},
\item[(f)] $\mb{S}$ and $\mb{T}$ are \emph{cross} ${\mb{D_Z}}$\emph{-orthogonal},
\item[(g)] $\mb{S}$ and $\mb{T}$ are \emph{cross} ${\mb{K_Z}}$\emph{-orthogonal}.
\end{enumerate}
\end{Thm}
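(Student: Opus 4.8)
The plan is to exploit the particular choice of Lagrange multipliers $(\bar{\mb{M}}_1,\bar{\mb{M}}_2)$ as a minimizer of $\mathrm{Trace}(\mb{M}_1^{*})$ over the compact set $\mathcal{Q}$, together with the identities already established. Writing $\mb{K}^{*}$ for $\mb{K}_{\mb{Z}|U}^{*}$, equations (2) and (7) give $\mb{K}^{*-1}=\bar{\mb{M}}_1+\bar{\mb{M}}_2=\mb{S}\mb{S}^{T}+\mb{T}\mb{T}^{T}$, and (11)--(12) give $(\mb{D_Z}-\mb{K}^{*})\mb{S}=\mb{0}$ and $(\mb{K_Z}-\mb{K}^{*})\mb{T}=\mb{0}$, equivalently $\mb{D_Z}\mb{S}=\mb{K}^{*}\mb{S}$ and $\mb{K_Z}\mb{T}=\mb{K}^{*}\mb{T}$; note also that the columns of $\mb{S}$ (resp.\ of $\mb{T}$) are linearly independent, being nonzero scalar multiples of orthonormal vectors. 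I would prove (a) and the linear-independence half of (b) by one common perturbation argument, deduce the rest of (b) from a rank count, and then read off (c)--(g) by direct computation.

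For (a), I would argue by contradiction: if $r>0$ while $\mb{S}^{T}(\mb{K_Z}-\mb{K}^{*})\mb{S}$ is not strictly positive definite, then since $\mb{K_Z}-\mb{K}^{*}\succcurlyeq\mb{0}$ there is a nonzero $\mb{v}$ with $(\mb{K_Z}-\mb{K}^{*})\mb{S}\mb{v}=\mb{0}$; put $\mb{w}:=\mb{S}\mb{v}\neq\mb{0}$, which lies in the range of $\mb{S}$ and hence of $\bar{\mb{M}}_1=\mb{S}\mb{S}^{T}$. For small $\epsilon>0$ set $\mb{M}_1'=\bar{\mb{M}}_1-\epsilon\mb{w}\mb{w}^{T}$ and $\mb{M}_2'=\bar{\mb{M}}_2+\epsilon\mb{w}\mb{w}^{T}$. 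Then $\mb{M}_1'+\mb{M}_2'=\mb{K}^{*-1}$, so (2) and (7) hold; $\mb{M}_2'\succcurlyeq\mb{0}$ trivially and $\mb{M}_1'\succcurlyeq\mb{0}$ for $\epsilon$ small because $\mb{w}$ lies in the range of $\bar{\mb{M}}_1$; and the complementary-slackness conditions (4) and (5) persist because $(\mb{D_Z}-\mb{K}^{*})\mb{w}=(\mb{D_Z}-\mb{K}^{*})\mb{S}\mb{v}=\mb{0}$ and $(\mb{K_Z}-\mb{K}^{*})\mb{w}=\mb{0}$. Hence $(\mb{M}_1',\mb{M}_2')\in\mathcal{Q}$ with $\mathrm{Trace}(\mb{M}_1')<\mathrm{Trace}(\bar{\mb{M}}_1)$, contradicting (8).

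For (b), I would first show the columns of $[\mb{S},\mb{T}]$ are linearly independent. If not, then since the columns of $\mb{S}$ alone and of $\mb{T}$ alone are independent, there are $\mb{v}\neq\mb{0}$ and $\mb{u}\neq\mb{0}$ with $\mb{z}:=\mb{S}\mb{v}=-\mb{T}\mb{u}\neq\mb{0}$; then $\mb{z}$ is in the range of $\mb{S}$ so $(\mb{D_Z}-\mb{K}^{*})\mb{z}=\mb{0}$, and $\mb{z}$ is in the range of $\mb{T}$ so $(\mb{K_Z}-\mb{K}^{*})\mb{z}=\mb{0}$, and the identical swap $\mb{M}_1'=\bar{\mb{M}}_1-\epsilon\mb{z}\mb{z}^{T}$, $\mb{M}_2'=\bar{\mb{M}}_2+\epsilon\mb{z}\mb{z}^{T}$ again lands in $\mathcal{Q}$ with smaller $\mathrm{Trace}(\mb{M}_1')$, a contradiction. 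Since $\mb{K}^{*-1}=[\mb{S},\mb{T}][\mb{S},\mb{T}]^{T}$ is invertible, $[\mb{S},\mb{T}]$ has rank $m$; with independent columns this forces $r+l=m$, so $[\mb{S},\mb{T}]$ is square and invertible, which is (b).

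Finally, (c)--(g) are routine. Writing $\mb{W}=[\mb{S},\mb{T}]$, invertibility together with $\mb{W}\mb{W}^{T}=\mb{K}^{*-1}$ gives $\mb{W}^{T}=\mb{W}^{-1}\mb{K}^{*-1}$, hence $\mb{W}^{T}\mb{K}^{*}\mb{W}=\mb{I}$, which is (c) and also supplies $\mb{S}^{T}\mb{K}^{*}\mb{S}=\mb{I}$, $\mb{T}^{T}\mb{K}^{*}\mb{T}=\mb{I}$, $\mb{S}^{T}\mb{K}^{*}\mb{T}=\mb{0}$. Then $\mb{D_Z}\mb{S}=\mb{K}^{*}\mb{S}$ gives $\mb{S}^{T}\mb{D_Z}\mb{S}=\mb{S}^{T}\mb{K}^{*}\mb{S}=\mb{I}$ (part (d)) and, since $\mb{S}^{T}\mb{D_Z}=(\mb{D_Z}\mb{S})^{T}=(\mb{K}^{*}\mb{S})^{T}=\mb{S}^{T}\mb{K}^{*}$, also $\mb{S}^{T}\mb{D_Z}\mb{T}=\mb{S}^{T}\mb{K}^{*}\mb{T}=\mb{0}$ (part (f)); the symmetric computation with $\mb{K_Z}\mb{T}=\mb{K}^{*}\mb{T}$ gives (e) and (g). I expect the main obstacle to be (b): the point is to see that the trace-minimality of $(\bar{\mb{M}}_1,\bar{\mb{M}}_2)$ is precisely the lever behind both the strict definiteness in (a) and the independence in (b), and the only delicate check is that the perturbed multiplier pair still satisfies every KKT condition, which works exactly because the perturbation direction lies simultaneously in $\ker(\mb{D_Z}-\mb{K}^{*})$ and $\ker(\mb{K_Z}-\mb{K}^{*})$.
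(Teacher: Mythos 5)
Your proof is correct and takes essentially the same route as the paper: the key lever in both (a) and (b) is the same perturbation $\bar{\mb{M}}_1\mapsto\bar{\mb{M}}_1-\epsilon\mb{w}\mb{w}^T$, $\bar{\mb{M}}_2\mapsto\bar{\mb{M}}_2+\epsilon\mb{w}\mb{w}^T$ with $\mb{w}$ chosen in $\ker(\mb{D_Z}-\mb{K}^{*})\cap\ker(\mb{K_Z}-\mb{K}^{*})$, which preserves all KKT conditions but strictly lowers $\mathrm{Trace}(\mb{M}_1)$, contradicting the trace-minimal choice of $(\bar{\mb{M}}_1,\bar{\mb{M}}_2)$; the rank count $r+l=m$ and the reads for (d)-(g) also coincide with the paper. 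The only genuine streamlining is part (c), where you get $[\mb{S},\mb{T}]^T\mb{K}^{*}[\mb{S},\mb{T}]=\mb{I}_m$ directly from $[\mb{S},\mb{T}][\mb{S},\mb{T}]^T=\mb{K}^{*-1}$ and invertibility, whereas the paper post-multiplies by $\mb{K}^{*}\mb{s}_1,\dots$ and matches coefficients term by term.
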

\begin{proof}
For part (a), it suffices to show that $\mb{S}^T \left (\mb{K_Z} - \mb{K}_{\mb{Z}|U}^{*} \right) \mb{S}$ is non-singular.  Suppose otherwise that it is singular. Then there exists $\mb{0} \neq \mb{e} \in \mathbb{R}^r$ such that
\[
\mb{e}^T\mb{S}^T \left (\mb{K_Z} - \mb{K}_{\mb{Z}|U}^{*} \right) \mb{S}\mb{e} = 0.
\]
Let $\mb{w} = \mb{S e}$. We then have
\begin{align}
\mb{w} = \sum_{i=1}^{r} e_i \mb{s}_i,
\end{align}
where $e_i$ is the $i$-th component of $\mb{e}$, and
\begin{align}
\left (\mb{K_Z} - \mb{K}_{\mb{Z}|U}^{*} \right) \mb{w} = \mb{0}.
\end{align}
Let
\begin{align}
\lambda_{min} = \min \left \{ \lambda_1, \lambda_2, \dots, \lambda_r \right \}
\end{align}
and pick any $\epsilon$ such that
\begin{align}
0 < \epsilon \le \frac{\lambda_{min}}{\|\mb{w}\|^2}.
\end{align}
Consider any $\mb{0} \neq \mb{z} \in \mathbb{R}^m$. Let $\mb{z_S}$ be the projection of $\mb{z}$ on span$\{\mb{S}\}$ and $\mb{z}_{\mb{S}^{\perp}}$ be the projection of $\mb{z}$ on the space orthogonal to span$\{\mb{S}\}$. We then have
\begin{align*}
\mb{z} = \mb{z_S}+\mb{z}_{\mb{S}^{\perp}}.
\end{align*}
Now,
\begin{align}
\mb{z}^{T}\bar {\mb{M}}_1 \mb{z} &= \left (\mb{z}_{\mb{S}}+\mb{z}_{\mb{S}^{\perp}} \right)^T \bar {\mb{M}}_1 \left (\mb{z}_{\mb{S}}+\mb{z}_{\mb{S}^{\perp}} \right) \nonumber\\
&= \mb{z}_{\mb{S}}^{T} {\bar {\mb{M}}}_1 \mb{z}_{\mb{S}}+\mb{z}_{\mb{S}}^{T} {\bar {\mb{M}}}_1 \mb{z}_{\mb{S}^{\perp}}+\mb{z}_{\mb{S}^{\perp}}^{T} {\bar {\mb{M}}}_1 \mb{z}_{\mb{S}}+\mb{z}_{\mb{S}^{\perp}}^{T} {\bar {\mb{M}}}_1 \mb{z}_{\mb{S}^{\perp}} \nonumber\\
&= \mb{z}_{\mb{S}}^{T} {\bar {\mb{M}}}_1 \mb{z}_{\mb{S}},
\end{align}
where (17) follows because
\[
{\bar {\mb{M}}}_1 \mb{z}_{\mb{S}^{\perp}}=\mb{0}.
\] Similarly,
\begin{align}
\mb{w}^{T} \mb{z} = \mb{w}^{T} \mb{z}_{\mb{S}}.
\end{align}
We now have
\begin{align}
\mb{z}^{T} \left (\bar {\mb{M}}_1 - \epsilon \mb{w} \mb{w}^{T}\right )\mb{z} &= \mb{z}^{T}\bar {\mb{M}}_1 \mb{z} - \epsilon \left (\mb{w}^{T} \mb{z} \right )^2 \nonumber\\
&= \mb{z}_{\mb{S}}^{T} {\bar {\mb{M}}}_1 \mb{z}_{\mb{S}} - \epsilon \left (\mb{w}^{T} \mb{z}_{\mb{S}} \right )^2 \\
&\ge \mb{z}_{\mb{S}}^{T} {\bar {\mb{M}}}_1 \mb{z}_{\mb{S}} - \epsilon \|\mb{w}\|^2 \|\mb{z}_{\mb{S}}\|^2 \\
&= \sum_{i=1}^{r} \lambda_i \left (\mb{s}_i^{T} \mb{z}_{\mb{S}} \right )^2  - \epsilon \|\mb{w}\|^2 \|\mb{z}_{\mb{S}}\|^2 \\
&\ge \lambda_{min} \sum_{i=1}^{r} \left (\mb{s}_i^{T} \mb{z}_{\mb{S}} \right )^2  - \epsilon \|\mb{w}\|^2 \|\mb{z}_{\mb{S}}\|^2 \\
&= \lambda_{min} \|\mb{z}_{\mb{S}}\|^2  - \epsilon \|\mb{w}\|^2 \|\mb{z}_{\mb{S}}\|^2 \\
&= \|\mb{z}_{\mb{S}}\|^2  \left (\lambda_{min}  - \epsilon \|\mb{w}\|^2 \right ) \nonumber \\
&\ge 0,
\end{align}
where
\begin{enumerate}
\item[(19)] follows from (17) and (18),
\item[(20)] follows from the Cauchy-Schwartz Inequality,
\item[(21)] follows from (9),
\item[(22)] follows from (15),
\item[(23)] follows because
\[
\|\mb{z}_{\mb{S}}\|^2 = \sum_{i=1}^{r} \left (\mb{s}_i^{T} \mb{z}_{\mb{S}} \right )^2, \hspace{0.05in} \textrm{and}
\]
\item[(24)] follows from (16).
\end{enumerate}
This proves that $\bar {\mb{M}}_1 - \epsilon \mb{w} \mb{w}^{T}$ is a positive semidefinite matrix. Let us define the matrices
\begin{align*}
\tilde {\mb{M}}_1 &= \bar {\mb{M}}_1 - \epsilon \mb{w} \mb{w}^{T} \\
\tilde {\mb{M}}_2 &= \bar {\mb{M}}_2 + \epsilon \mb{w} \mb{w}^{T}.
\end{align*}
Then
\begin{enumerate}
\item[(i)] $\tilde {\mb{M}}_1,\tilde {\mb{M}}_2 \succcurlyeq \mb{0}$,
\item[(ii)] $\tilde {\mb{M}}_1+\tilde {\mb{M}}_2 = \bar {\mb{M}}_1 +\bar {\mb{M}}_2={\mb{K}_{\mb{Z}|U}^{*-1}},$
\item[(iii)] \begin{align}
\left (\mb{D_Z} - \mb{K}_{\mb{Z}|U}^{*} \right )\tilde {\mb{M}}_1 &= \left (\mb{D_Z} - \mb{K}_{\mb{Z}|U}^{*} \right ) \left (\bar {\mb{M}}_1 - \epsilon \mb{w} \mb{w}^{T} \right ) \nonumber\\
&=\left (\mb{D_Z} - \mb{K}_{\mb{Z}|U}^{*} \right )\bar {\mb{M}}_1 - \left (\mb{D_Z} - \mb{K}_{\mb{Z}|U}^{*} \right )\epsilon \mb{w} \mb{w}^{T} \nonumber \\
&= \mb{0} - \left (\mb{D_Z} - \mb{K}_{\mb{Z}|U}^{*} \right )\epsilon \sum_{i,j=1}^{r} e_i e_j \mb{s}_i \mb{s}_j^{T} \\
&= \mb{0},
\end{align}
where (25) and (26) follow from (13) and (11), respectively, and
\item[(iv)] \begin{align}
\left (\mb{K_Z} - \mb{K}_{\mb{Z}|U}^{*} \right  )\mb{M}_2^{*} &= \left (\mb{K_Z} - \mb{K}_{\mb{Z}|U}^{*} \right ) \left (\bar {\mb{M}}_2 + \epsilon \mb{w} \mb{w}^{T} \right ) \nonumber\\
&=\left (\mb{K_Z} - \mb{K}_{\mb{Z}|U}^{*} \right )\bar {\mb{M}}_2 + \left (\mb{K_Z} - \mb{K}_{\mb{Z}|U}^{*} \right )\epsilon \mb{w} \mb{w}^{T} \nonumber\\
&= \mb{0},
\end{align}
where (27) follows from (14).
\end{enumerate}
So, $\left ({\mb{K}_{\mb{Z}|U}^{*}}, \tilde {\mb{M}}_1, \tilde {\mb{M}}_2 \right )$ satisfies the KKT conditions, and hence is optimal for the optimization problem (1). But then
\begin{align*}
\textrm{Trace}(\tilde {\mb{M}}_1) &= \textrm{Trace}(\bar {\mb{M}}_1) - \textrm{Trace}(\epsilon \mb{w} \mb{w}^{T}) \\
&< \textrm{Trace}(\bar {\mb{M}}_1),
\end{align*}
which is a contradiction to the assumption that $\left (\bar {\mb{M}}_1, \bar {\mb{M}}_2 \right )$ solves the optimization problem (8). Therefore, $\mb{S}^T \left (\mb{K_Z} - \mb{K}_{\mb{Z}|U}^{*} \right) \mb{S}$ is non-singular.

The proof of part (b) is similar to that of part (a). We first show by contradiction that the columns of $[\mb{S,T}]$ are linearly independent. Suppose otherwise that they are linearly dependent. Note that the columns of $\mb{S}$ and $\mb{T}$ are linearly independent. Therefore, there exists
\begin{align*}
\mb{0} \neq \mb{w} = \sum_{i=1}^{r} a_i \mb{s}_i = \sum_{i=1}^{l} b_i \mb{t}_i,
\end{align*}
where $a_i$'s and $b_i$'s are scalars such that at least one of the $a_i$'s and at least one of the $b_i$'s are nonzero. Pick any $\epsilon$ such that
\begin{align*}
0 < \epsilon \le \frac{\lambda_{min}}{\|\mb{w}\|^2}.
\end{align*}
Then as proved in part (a), $\bar {\mb{M}}_1 - \epsilon \mb{w} \mb{w}^{T}$ is a positive semidefinite matrix and the matrices $\tilde {\mb{M}}_1$ and $\tilde {\mb{M}}_2$ defined as before satisfy (i)-(iii) above. Moreover,
\begin{align}
\left (\mb{K_Z} - \mb{K}_{\mb{Z}|U}^{*} \right  )\mb{M}_2^{*} &= \left (\mb{K_Z} - \mb{K}_{\mb{Z}|U}^{*} \right ) \left (\bar {\mb{M}}_2 + \epsilon \mb{w} \mb{w}^{T} \right ) \nonumber\\
&=\left (\mb{K_Z} - \mb{K}_{\mb{Z}|U}^{*} \right )\bar {\mb{M}}_2 + \left (\mb{K_Z} - \mb{K}_{\mb{Z}|U}^{*} \right )\epsilon \mb{w} \mb{w}^{T} \nonumber\\
&= \mb{0} + \left (\mb{K_Z} - \mb{K}_{\mb{Z}|U}^{*} \right )\epsilon \sum_{i,j=1}^{l} b_i b_j \mb{t}_i \mb{t}_j^{T} \nonumber\\
&= \mb{0},
\end{align}
where (28) follows from (12). So, we again have that $\left ({\mb{K}_{\mb{Z}|U}^{*}}, \tilde {\mb{M}}_1, \tilde {\mb{M}}_2 \right )$ satisfies the KKT conditions, and hence as before we have arrived at a contradiction. Therefore, the columns of $[\mb{S,T}]$ are linearly independent which implies that
\begin{align}
r+l \le m.
\end{align}
Next, we have from (2), (7), (9) and (10) that
\begin{align}
{\mb{K}_{\mb{Z}|U}^{*-1}} &= \bar {\mb{M}}_1 +\bar {\mb{M}}_2
= \sum_{i=1}^{r} \lambda_i \mb{s}_i \mb{s}_i^{T} +\sum_{i=1}^{l} \gamma_i \mb{t}_i \mb{t}_i^{T},
\end{align}
which means that
\begin{align}
r+l \ge m,
\end{align}
because otherwise ${\mb{K}_{\mb{Z}|U}^{*-1}}$ will be singular. (29) and (31) imply that
\begin{align*}
r+l = m,
\end{align*}
which means that $[\mb{S,T}]$ is a square matrix, and is therefore invertible because it has linearly independent columns.

For part (c), on post-multiplying (30) by ${\mb{K}_{\mb{Z}|U}^{*}} \mb{s}_1$, we obtain
\begin{align*}
\mb{s}_1 &= \sum_{i=1}^{r} \lambda_i \mb{s}_i \left (\mb{s}_i^{T} {\mb{K}_{\mb{Z}|U}^{*}} \mb{s}_1 \right )+\sum_{i=1}^{l} \gamma_i \mb{t}_i \left (\mb{t}_i^{T} {\mb{K}_{\mb{Z}|U}^{*}} \mb{s}_1 \right )
\end{align*}
which can be re-written as
\begin{align}
\mb{s}_1 \left (1 - \lambda_1 \left (\mb{s}_1^{T} {\mb{K}_{\mb{Z}|U}^{*}} \mb{s}_1 \right ) \right ) - \sum_{i=2}^{r} \lambda_i \mb{s}_i \left (\mb{s}_i^{T} {\mb{K}_{\mb{Z}|U}^{*}} \mb{s}_1 \right ) &=\sum_{i=1}^{l} \gamma_i \mb{t}_i \left (\mb{t}_i^{T} {\mb{K}_{\mb{Z}|U}^{*}} \mb{s}_1 \right ).
\end{align}
Since the columns of $[\mb{S,T}]$ are linearly independent by part (b), the coefficients of all vectors in (32) must be zero. Therefore,
\begin{align*}
\lambda_1 \mb{s}_1^{T} {\mb{K}_{\mb{Z}|U}^{*}} \mb{s}_1 &= 1 , \\
\mb{s}_i^{T} {\mb{K}_{\mb{Z}|U}^{*}} \mb{s}_1 &= 0, \hspace{0.1in}\forall i \in \{2,\dots,r\}, \\
\mb{t}_i^{T} {\mb{K}_{\mb{Z}|U}^{*}} \mb{s}_1 &= 0, \hspace{0.1in}\forall i \in \{1,\dots,l\}.
\end{align*}
Likewise, on post-multiplying (30) by ${\mb{K}_{\mb{Z}|U}^{*}} \mb{s}_2, \dots, {\mb{K}_{\mb{Z}|U}^{*}} \mb{s}_r, {\mb{K}_{\mb{Z}|U}^{*}} \mb{t}_1 \dots, {\mb{K}_{\mb{Z}|U}^{*}} \mb{t}_l$ and then equating all the coefficients to zero, we obtain similar equations. In summary,
\begin{align*}
\lambda_i \mb{s}_i^{T} {\mb{K}_{\mb{Z}|U}^{*}} \mb{s}_i &= 1, \hspace{0.1in}\forall i \in \{1,\dots,r\},\\
\gamma_i\mb{t}_i^{T} {\mb{K}_{\mb{Z}|U}^{*}} \mb{t}_i &=1 , \hspace{0.1in}\forall i \in \{1,\dots,l\},\\
\mb{s}_i^{T} {\mb{K}_{\mb{Z}|U}^{*}} \mb{s}_j &= 0, \hspace{0.1in}\forall i,j \in \{1,\dots,r\}, i \neq j, \\
\mb{t}_i^{T} {\mb{K}_{\mb{Z}|U}^{*}} \mb{t}_j &= 0, \hspace{0.1in}\forall i,j \in \{1,\dots,l\}, i \neq j, \\
\mb{s}_i^{T} {\mb{K}_{\mb{Z}|U}^{*}} \mb{t}_j &= 0, \hspace{0.1in}\forall i \in \{1,\dots,r\}, \forall j \in \{1,\dots,l\},
\end{align*}
which imply that
\begin{align}
[\mb{S,T}]^T {\mb{K}_{\mb{Z}|U}^{*}} [\mb{S,T}]= \mb{I}_m.
\end{align}
Hence, $[\mb{S,T}]$ is ${\mb{K}_{\mb{Z}|U}^{*}}$-orthogonal.

For parts (d) to (g), we have from (11) and (12) that
\begin{align*}
\mb{D_Z} \mb{S} &= {\mb{K}_{\mb{Z}|U}^{*}} \mb{S}, \\
\mb{K_Z} \mb{T} &= {\mb{K}_{\mb{Z}|U}^{*}} \mb{T},
\end{align*}
which along with (33) imply
\begin{align*}
\mb{S}^T \mb{D_Z} \mb{S} &= \mb{S}^T {\mb{K}_{\mb{Z}|U}^{*}} \mb{S} = \mb{I}_r, \\
\mb{T}^T \mb{K_Z} \mb{T} &= \mb{T}^T {\mb{K}_{\mb{Z}|U}^{*}} \mb{T} = \mb{I}_l, \\
\mb{T}^T \mb{D_Z} \mb{S} &= \mb{T}^T {\mb{K}_{\mb{Z}|U}^{*}} \mb{S} = \mb{0}, \\
\mb{S}^T \mb{K_Z} \mb{T} &= \mb{S}^T {\mb{K}_{\mb{Z}|U}^{*}} \mb{T} = \mb{0}.
\end{align*}
This completes the proof of Theorem 2.
\end{proof}
It is clear from Theorem 2 that span$\{\mb{S}\}$ is the set of directions in which the encoder sends information until the distortion constraint is met with equality. Similarly, span$\{\mb{T}\}$ is the set of directions in which the encoder sends no information and hence $\mb{K}_{\mb{Z}}$ constraint is met with equality in such directions. Note that if $\mb{S}$ is an empty matrix, then the rate-distortion function is zero.
\section{Converse Proof of the Main Result}
An outline of the converse proof is as follows. We start with a single letter outer bound to the rate region $\mathcal{R}(\mb{D})$. The single letter outer bound defines the \emph{main optimization problem} $P$ that lower bounds the first encoder's achievable rate for fixed $\mb{D}$ and $R_2$. We solve the Gaussian version $P_G$ of the \emph{main optimization problem} $P$ by restricting the solution space to Gaussian distributions. We show that the problem $P_G$ can be reduced to a problem similar to (1). Hence, its optimal solution gives two sets of directions $\mb{S}$ and $\mb{T}$ as discussed in Section 4. The distortion constraint is tight in directions spanned by the columns of $\mb{S}$. The idea then is to define a potentially reduced dimensional problem, namely the \emph{reduced main optimization problem} $\tilde P$ by projecting the main source $\mb{X}$ on $\mb{S}$ and by imposing the distortion constraint only in directions spanned by the columns of $\mb{S}$. The \emph{reduced main optimization problem} $\tilde P$ lower bounds the \emph{main optimization problem} $P$ and its optimal solution is Gaussian. Moreover, $P_G$ and ${\tilde P}$ have the same optimal values. Therefore, the optimal solution to the \emph{main optimization problem} $P$ is Gaussian.

Liu and Viswanath gave a single-letter outer bound to the rate region in \cite{Lui}. We arrive at a similar outer bound by using a slightly different outer bounding technique.
\begin{Lem} (Single-letter outer bound)
If the rate-distortion vector $(R_1, R_2, \mb{D})$ is achievable then there exists random variables $U$ and $V$ such that
\begin{align*}
R_1 &\ge I(\mb{X};U|V) \\
R_2 &\ge I(Y;V) \\
\mb{D} &\succcurlyeq \mb{K}_{\mb{X}|U,V} \\
\mb{X} &\leftrightarrow Y \leftrightarrow V.
\end{align*}
\end{Lem}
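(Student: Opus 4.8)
The plan is to prove this single-letter outer bound by the standard single-letterization of the converse, together with a time-sharing random variable; the only structural facts I would use are that the pairs $\{(\mb{X}_i,Y_i)\}_{i=1}^{n}$ are i.i.d., that $J_1:=f_1^{(n)}(\mb{X}^n)$ and $J_2:=f_2^{(n)}(Y^n)$ are functions of $\mb{X}^n$ and of $Y^n$ respectively, and that (as already noted) the decoder may be taken to be the MMSE estimator, so that $\frac1n\sum_{i=1}^{n}\mb{K}_{\mb{X}_i|J_1,J_2}\preccurlyeq\mb{D}$. I would set, for each $i$,
\[
V_i:=(J_2,Y^{i-1}),\qquad U_i:=(J_1,\mb{X}^{i-1}),
\]
introduce a time-sharing variable $Q$ uniform on $\{1,\dots,n\}$ and independent of $(\mb{X}^n,Y^n)$, and take $U:=(U_Q,Q)$, $V:=(V_Q,Q)$, $\mb{X}:=\mb{X}_Q$, $Y:=Y_Q$.

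The helper rate bound is the easy one: $nR_2\ge H(J_2)\ge I(Y^n;J_2)=\sum_i I(Y_i;J_2|Y^{i-1})=\sum_i I(Y_i;V_i)$, using $I(Y_i;Y^{i-1})=0$; dividing by $n$ and noting that $Y_Q$ is independent of $Q$ gives $R_2\ge I(Y;V)$. For the rate of the first encoder I would begin with $nR_1\ge H(J_1|J_2)=I(\mb{X}^n,Y^n;J_1|J_2)$ (the equality because $J_1$ is a function of $\mb{X}^n$), apply the chain rule over the vector pairs $(\mb{X}_i,Y_i)$, and discard the nonnegative terms $I(Y_i;J_1|\mb{X}_i,\mb{X}^{i-1},Y^{i-1},J_2)$ to obtain $nR_1\ge\sum_i I(\mb{X}_i;J_1|\mb{X}^{i-1},Y^{i-1},J_2)$. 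The crux is the memorylessness identity
\[
I(\mb{X}_i;\mb{X}^{i-1}|J_2,Y^{i-1})=0,
\]
which holds because, conditioned on $Y^{i-1}$, the block $\mb{X}^{i-1}$ is independent of $(\mb{X}_i,\dots,\mb{X}_n,Y_i,\dots,Y_n)$, while $J_2$ is a deterministic function of $(Y^{i-1},Y_i,\dots,Y_n)$, so further conditioning on $J_2$ does not couple $\mb{X}^{i-1}$ with $\mb{X}_i$. Feeding this into the chain rule $I(\mb{X}_i;J_1,\mb{X}^{i-1}|J_2,Y^{i-1})=I(\mb{X}_i;\mb{X}^{i-1}|J_2,Y^{i-1})+I(\mb{X}_i;J_1|\mb{X}^{i-1},Y^{i-1},J_2)$ turns each summand into $I(\mb{X}_i;U_i|V_i)$, so $nR_1\ge\sum_i I(\mb{X}_i;U_i|V_i)$, and dividing by $n$ (with $Q$ carried inside both $U$ and $V$) gives $R_1\ge I(\mb{X};U|V)$.

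It then remains to check the Markov chain and the distortion inequality. For $\mb{X}\leftrightarrow Y\leftrightarrow V$ I would verify $\mb{X}_i\leftrightarrow Y_i\leftrightarrow(J_2,Y^{i-1})$ for each $i$: the pair $(\mb{X}_i,Y_i)$ is independent of $(Y^{i-1},Y_{i+1},\dots,Y_n)$ and $J_2$ is a function of $Y^n$, so conditioning on $Y_i$ renders $\mb{X}_i$ independent of $(J_2,Y^{i-1})$; since moreover the conditional law of $\mb{X}_i$ given $Y_i$ does not depend on $i$, $\mb{X}_Q$ is conditionally independent of $(V_Q,Q)$ given $Y_Q$, which is exactly $\mb{X}\leftrightarrow Y\leftrightarrow V$. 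For the distortion, $(U_i,V_i)$ refines $(J_1,J_2)$, so $\mb{K}_{\mb{X}_i|U_i,V_i}\preccurlyeq\mb{K}_{\mb{X}_i|J_1,J_2}$ in the positive semidefinite (Loewner) order, since extra conditioning cannot increase the MMSE covariance; as $\mb{K}_{\mb{X}|U,V}=\frac1n\sum_i\mb{K}_{\mb{X}_i|U_i,V_i}$, this yields $\mb{K}_{\mb{X}|U,V}\preccurlyeq\frac1n\sum_i\mb{K}_{\mb{X}_i|J_1,J_2}\preccurlyeq\mb{D}$. A routine continuity argument extends the conclusion from achievable triples to their closure if needed.

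I expect the $R_1$ single-letterization to be the step requiring the most care: the chain rule unavoidably produces a conditioning on $\mb{X}^{i-1}$, whereas the helper auxiliary $V_i$ must instead condition on $Y^{i-1}$ (conditioning $V_i$ on $\mb{X}^{i-1}$ would break the $R_2$ bound), and bridging this gap is precisely what the pairwise chain rule together with $I(\mb{X}_i;\mb{X}^{i-1}|J_2,Y^{i-1})=0$ accomplishes. The other potential nuisance is that $U$ and $V$ are mixed discrete--continuous objects, so the mutual informations and conditional covariances should be read in their standard senses; this introduces no genuine obstacle.
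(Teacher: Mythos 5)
Your proof is correct, and it follows the same general template as the paper's (single-letterization plus a time-sharing variable $Q$), but with a genuinely different identification of the auxiliaries. The paper takes $V_i=(\mb{X}^{i-1},C_2)$ and $U=(Q,C_1)$: with that choice the $R_1$ chain is immediate from the chain rule $I(\mb{X}^n;C_1|C_2)=\sum_i I(\mb{X}_i;C_1|\mb{X}^{i-1},C_2)$, and the source structure is invoked in the $R_2$ chain, via the Markov step $h(Y_i\mid Y^{i-1},C_2)=h(Y_i\mid \mb{X}^{i-1},Y^{i-1},C_2)$ followed by dropping $Y^{i-1}$. You instead take $V_i=(J_2,Y^{i-1})$ and $U_i=(J_1,\mb{X}^{i-1})$: now the $R_2$ chain is immediate, and the memorylessness is invoked in the $R_1$ chain through the identity $I(\mb{X}_i;\mb{X}^{i-1}\mid J_2,Y^{i-1})=0$, which you justify correctly (conditioned on $Y^{i-1}$, the past $\mb{X}^{i-1}$ is independent of $(\mb{X}_i^n,Y_i^n)$, and $J_2$ is a function of $Y^n$). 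Both choices satisfy the required chain $\mb{X}\leftrightarrow Y\leftrightarrow V$ — the paper's because $\mb{N}_i$ is independent of $(\mb{X}^{i-1},C_2)$, yours because $(J_2,Y^{i-1})$ is a function of $Y^n$ alone — and both yield the same distortion bound by the ``extra conditioning reduces MMSE covariance'' argument. Your version has the mild aesthetic advantage that $V$ depends only on $(Y^n\text{-measurable data},\,Y^{i-1})$, making the Markov condition transparent, at the cost of enlarging $U$ with $\mb{X}^{i-1}$ (harmless, since the lemma only asserts existence of some $(U,V)$ and larger $U$ only helps the distortion constraint). Either route proves the lemma.
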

\begin{proof}
See Appendix B.
\end{proof}

Let us define the \emph{main optimization problem} $P$ as
\begin{align*}
\min_{U,V} \hspace {0.1in} &I(\mb{X};U|V) \nonumber\\
\textrm{subject to} \hspace{0.1in} &R_2 \ge I(Y;V) \\
&\mb{D} \succcurlyeq \mb{K}_{\mb{X}|U,V} \nonumber\\
&\mb{X} \leftrightarrow Y \leftrightarrow V.\nonumber
\end{align*}
We will show that the optimal solution to $P$ is Gaussian. Let us first restrict the solution space to Gaussian distributions. This results in an optimization problem $P_G$ over the conditional covariance matrix $\mb{K}_{\mb{X}|U,V}$ and the conditional variance $\sigma^2_{Y|V}$. Formally, it can be defined as
\begin{align*}
\min_{\mb{K}_{\mb{X}|U,V},\sigma^2_{Y|V}} \hspace {0.1in} &\frac{1}{2} \log \frac{ \left|\mb{K}_{\mb{X}|V} \right |}{ \left |\mb{K}_{\mb{X}|U,V} \right |} \nonumber\\
\textrm{subject to} \hspace{0.1in} &R_2 \ge \frac{1}{2} \log \frac{\sigma^2_Y}{\sigma^2_{Y|V}} \\
&\mb{D} \succcurlyeq \mb{K}_{\mb{X}|U,V} \succcurlyeq \mb{0}\nonumber\\
&\mb{K}_{\mb{X}|V} \succcurlyeq \mb{K}_{\mb{X}|U,V} \nonumber,
\end{align*}
where
\begin{align*}
\mb{K}_{\mb{X}|V} = \mb{a} \mb{a}^{T} \sigma^2_{Y|V} + \mb{K_N}.
\end{align*}
Let us denote the optimal values of $P$ and $P_G$ by $v(P)$ and $v(P_G)$, respectively. The same notation is used to denote the optimal values of other optimization problems defined in the paper. We can rewrite $P_G$ as
\begin{align}
\min_{\sigma^2_{Y|V}} \hspace{0.1in} & \frac{1}{2} \log \left |\mb{K}_{\mb{X}|V} \right| - \frac{1}{2} v \left (F \left (\mb{D}, \mb{K}_{\mb{X}|V} \right ) \right ) \nonumber\\
\textrm{subject to} \hspace{0.1in} &R_2 \ge \frac{1}{2} \log \frac{\sigma^2_Y}{\sigma^2_{Y|V}},
\end{align}
which is a double optimization problem. Note that for a fixed $\sigma^2_{Y|V}$, the inner optimization problem turns out to be $F \left ( \mb{D}, \mb{K}_{\mb{X}|V} \right )$, which was defined in (1).

Since $P_G$ has a continuous objective and a compact feasible set, there exists an optimal solution $\left (\mb{K}_{\mb{X}|U^{*},V^{*}},\sigma^2_{Y|V^{*}}\right )$ to it, where $U^{*}$ and $V^{*}$ represent the corresponding optimal Gaussian random variables. We now have the following lemma which states that it is optimal for the second encoder to use all $R_2$ bits for sending a message to the decoder.
\begin{Lem} There exists an optimal conditional variance $\sigma^2_{Y|V^{*}}$ such that
\begin{align}
\sigma^2_{Y|V^{*}} = \sigma^2_{Y} 2^{-2 R_2}.
\end{align}
\end{Lem}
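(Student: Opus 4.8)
The plan is to show that if some optimal $\sigma^2_{Y|V^*}$ satisfies $\sigma^2_{Y|V^*} > \sigma^2_Y 2^{-2R_2}$, then we can strictly decrease (or at least not increase) the objective of the double optimization problem (35) by decreasing $\sigma^2_{Y|V}$ down to $\sigma^2_Y 2^{-2R_2}$, which is feasible because the constraint $R_2 \ge \frac{1}{2}\log(\sigma^2_Y/\sigma^2_{Y|V})$ is equivalent to $\sigma^2_{Y|V} \ge \sigma^2_Y 2^{-2R_2}$. So the whole argument reduces to a monotonicity statement: the function
\begin{align*}
\phi(\sigma^2_{Y|V}) = \tfrac{1}{2}\log\left|\mb{K}_{\mb{X}|V}\right| - \tfrac{1}{2} v\!\left(F\!\left(\mb{D},\mb{K}_{\mb{X}|V}\right)\right),
\end{align*}
where $\mb{K}_{\mb{X}|V} = \mb{a}\mb{a}^T\sigma^2_{Y|V} + \mb{K_N}$, is nondecreasing in $\sigma^2_{Y|V}$ on the relevant interval. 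Since $\mb{K}_{\mb{X}|V}$ is (weakly) increasing in $\sigma^2_{Y|V}$ in the positive semidefinite order, it suffices to prove: if $\mb{K}_1 \preccurlyeq \mb{K}_2$ are two candidate values of $\mb{K}_{\mb{X}|V}$ (both $\succcurlyeq \mb{K_N}$ and of the stated rank-one-update form), then $\tfrac{1}{2}\log|\mb{K}_1| - \tfrac{1}{2}v(F(\mb{D},\mb{K}_1)) \le \tfrac{1}{2}\log|\mb{K}_2| - \tfrac{1}{2}v(F(\mb{D},\mb{K}_2))$.

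The key step is therefore to understand how $\log|\mb{K}| - v(F(\mb{D},\mb{K}))$ behaves as $\mb{K}$ grows. Recall that $v(F(\mb{D},\mb{K})) = \max\{\log|\mb{K}'| : \mb{0}\preccurlyeq\mb{K}'\preccurlyeq\mb{D},\ \mb{K}'\preccurlyeq\mb{K}\}$, so $\log|\mb{K}| - v(F(\mb{D},\mb{K})) = \min\{\log|\mb{K}| - \log|\mb{K}'|\} = \min\{\log|\mb{K}(\mb{K}')^{-1}|\}$ over the same feasible set. I would first handle the easy regime: if $\mb{D} \succcurlyeq \mb{K}_2 \succcurlyeq \mb{K}_1$, then the constraint $\mb{K}'\preccurlyeq\mb{K}$ is binding, the optimum is $\mb{K}' = \mb{K}$, so $\phi \equiv 0$ on that range and there is nothing to prove. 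In the complementary regime, the distortion constraint is (at least partially) active, and here I would invoke Theorem 2: at the optimum $\mb{K}^* = \mb{K}^*_{\mb{Z}|U}$ (with $\mb{Z}=\mb{X}$, $\mb{D_Z}=\mb{D}$, $\mb{K_Z}=\mb{K}$) there are directions $\mb{S}$ (distortion-active) and $\mb{T}$ ($\mb{K}$-active) with $[\mb{S},\mb{T}]$ invertible and $\mb{K}^*$-orthogonal. Using the change of basis $[\mb{S},\mb{T}]$ one can block-diagonalize: in the $\mb{S}$-block $\mb{K}^*$ matches $\mb{D}$, in the $\mb{T}$-block $\mb{K}^*$ matches $\mb{K}$, and $\log|\mb{K}| - \log|\mb{K}^*|$ collapses to the ratio of the $\mb{S}\times\mb{S}$ blocks of $\mb{K}$ and $\mb{D}$ in this basis. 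Since increasing $\sigma^2_{Y|V}$ only adds a positive semidefinite rank-one term $\mb{a}\mb{a}^T\,\delta$ to $\mb{K}$, the $\mb{S}$-block of $\mb{K}$ cannot decrease, so the ratio cannot decrease; the full monotonicity then follows by a continuity/limiting argument as $\delta \to 0$ together with an infinitesimal (derivative) version of this inequality, or alternatively by directly comparing the two optimization problems $F(\mb{D},\mb{K}_1)$ and $F(\mb{D},\mb{K}_2)$: any $\mb{K}'$ feasible for the former is feasible for the latter, which already gives $v(F(\mb{D},\mb{K}_1)) \le v(F(\mb{D},\mb{K}_2))$, and one needs the matching lower bound on the increment of $v(F)$ relative to the increment of $\log|\mb{K}|$.

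The main obstacle I anticipate is precisely this last comparison — showing that the increment in $v(F(\mb{D},\cdot))$ as $\mb{K}$ grows is at least the increment in $\log|\mb{K}|$, i.e. that $\phi$ does not decrease. The clean way to nail this is to use that $\mb{K}$ changes only by a rank-one perturbation in the fixed direction $\mb{a}$: write $\mb{K}_2 = \mb{K}_1 + \mb{a}\mb{a}^T\delta$ with $\delta>0$ and show $\phi(\mb{K}_2) \ge \phi(\mb{K}_1)$ by exhibiting, from the optimal $\mb{K}^*_1$ for $F(\mb{D},\mb{K}_1)$, a feasible competitor for $F(\mb{D},\mb{K}_2)$ whose log-determinant exceeds $\log|\mb{K}^*_1|$ by at least $\log|\mb{K}_2| - \log|\mb{K}_1| = \log(1 + \delta\,\mb{a}^T\mb{K}_1^{-1}\mb{a})$. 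The natural candidate is $\mb{K}^*_1$ itself perturbed along the directions in $\mb{S}$ where the $\mb{D}$-constraint has slack relative to the enlarged $\mb{K}_2$ — but one must check this stays $\preccurlyeq \mb{D}$; Theorem 2(a), which guarantees $\mb{S}^T(\mb{K}_1 - \mb{K}^*_1)\mb{S} \succ \mb{0}$ (strict slack of the $\mb{K}$-constraint in the $\mb{S}$-directions), is exactly what provides the room to absorb the rank-one bump without violating $\mb{K}' \preccurlyeq \mb{K}_2$. I would carry out this perturbation explicitly in the $[\mb{S},\mb{T}]$ basis, verify the two semidefinite constraints block by block, and compute the resulting change in log-determinant, which should reproduce $\log(1+\delta\,\mb{a}^T\mb{K}_1^{-1}\mb{a})$ up to the correct sign, completing the monotonicity and hence the lemma.
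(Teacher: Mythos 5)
Your high-level reduction is sound and in spirit matches what the paper does: the objective of $P_G$ in (34) is a function $\phi$ of $\sigma^2_{Y|V}$ alone, the rate constraint just restricts $\sigma^2_{Y|V}$ to $[\sigma^2_Y 2^{-2R_2},\sigma^2_Y]$, and Lemma~3 follows once you show $\phi$ is nondecreasing on that interval. The gap is in the monotonicity argument: you have the direction of the required inequality reversed, and your proposed construction would prove the \emph{opposite} of what is needed. Writing $\phi(\mb{K}) = \log|\mb{K}| - v(F(\mb{D},\mb{K}))$, to get $\phi(\mb{K}_2) \ge \phi(\mb{K}_1)$ you need $v(F(\mb{D},\mb{K}_2)) - v(F(\mb{D},\mb{K}_1)) \le \log|\mb{K}_2| - \log|\mb{K}_1|$, i.e.\ an \emph{upper} bound on the growth of $v(F)$. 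But the strategy you sketch --- produce, from the optimizer $\mb{K}^*_1$ of $F(\mb{D},\mb{K}_1)$, a competitor feasible for $F(\mb{D},\mb{K}_2)$ whose log-determinant exceeds $\log|\mb{K}^*_1|$ by at least $\log|\mb{K}_2|-\log|\mb{K}_1|$ --- would yield a \emph{lower} bound $v(F(\mb{D},\mb{K}_2)) \ge v(F(\mb{D},\mb{K}_1)) + \log|\mb{K}_2|-\log|\mb{K}_1|$, hence $\phi(\mb{K}_2)\le\phi(\mb{K}_1)$. The same reversal appears where you say one needs ``the matching lower bound on the increment of $v(F)$,'' and where you claim ``increment of $v(F)$ at least the increment of $\log|\mb{K}|$ \ i.e.\ $\phi$ does not decrease.'' To go the correct way you would have to start from the optimizer $\mb{K}^*_2$ for the \emph{larger} matrix $\mb{K}_2$ and manufacture a competitor for $F(\mb{D},\mb{K}_1)$, and verifying $\preccurlyeq\mb{D}$ for such a shrunken candidate is the nontrivial step, not something Theorem~2(a) hands you directly.

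It is also worth noting that the paper's own route avoids these matrix-analytic comparisons entirely. Appendix~C first shows, via the structure from Theorem~2, that one may take $U^*$ in the test-channel form $\mb{S}^T\mb{X}+\mb{W}$ with $\mb{K_W}$ chosen to reproduce $\mb{K}_{\mb{X}|U^*,V^*}$, which upgrades the short Markov chain to the long one $U^*\leftrightarrow\mb{X}\leftrightarrow Y\leftrightarrow V^*$. That preliminary reduction is the key ingredient your sketch is missing: once it is in place, one can interpolate a $\tilde V$ with $U^*\leftrightarrow\mb{X}\leftrightarrow Y\leftrightarrow\tilde V\leftrightarrow V^*$ and $\sigma^2_{Y|\tilde V}=\sigma^2_{Y|V^*}-\epsilon$, check feasibility of $(\mb{K}_{\mb{X}|U^*,\tilde V},\sigma^2_{Y|\tilde V})$ for $P_G$ using the nested conditioning $\mb{K}_{\mb{X}|U^*,\tilde V}=\mb{K}_{\mb{X}|U^*,\tilde V,V^*}\preccurlyeq\mb{K}_{\mb{X}|U^*,V^*}\preccurlyeq\mb{D}$, and show $I(U^*;\mb{X}|\tilde V)<I(U^*;\mb{X}|V^*)$ by a chain-rule computation --- i.e.\ the monotonicity is established probabilistically, with the explicit test-channel form of $U^*$ carrying the weight that your matrix perturbation was supposed to.
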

\begin{proof}
See Appendix C.
\end{proof}
(34) and Lemma 3 immediately imply that the optimal value of $P_G$ is
\begin{align}
v(P_G) = \frac{1}{2} \log \left |\mb{K}_{\mb{X}|V^{*}} \right| - \frac{1}{2} v \left (F \left (\mb{D}, \mb{K}_{\mb{X}|V^{*}} \right ) \right ),
\end{align}
where
\begin{align}
\mb{K}_{\mb{X}|V^{*}} &= \mb{a} \mb{a}^{T} \sigma^2_{Y} 2^{-2 R_2} + \mb{K_N},
\end{align}
and $\mb{K}_{\mb{X}|U^{*},V^{*}}$ is optimal for problem $F \left(\mb{D}, \mb{K}_{\mb{X}|V^{*}} \right )$ with an optimal value
\begin{align}
v \left (F \left (\mb{D}, \mb{K}_{\mb{X}|V^{*}} \right ) \right ) = \frac{1}{2} \log \left |\mb{K}_{\mb{X}|U^{*},V^{*}} \right|.
\end{align}
As discussed in Section 4, $\mb{K}_{\mb{X}|U^{*},V^{*}}$ gives two sets of directions $\mb{S}$ and $\mb{T}$ which satisfy the properties in Theorem 2. On substituting (38) into (36), we obtain
\begin{align}
v(P_G) &= \frac{1}{2} \log \frac{\left |\mb{K}_{\mb{X}|V^{*}} \right |}{ \left |\mb{K}_{\mb{X}|U^{*},V^{*}} \right |} \nonumber\\
&= \frac{1}{2} \log \frac{\left |[\mb{S,T}]^T \mb{K}_{\mb{X}|V^{*}} [\mb{S,T}] \right |}{ \left | [\mb{S,T}]^T \mb{K}_{\mb{X}|U^{*},V^{*}} [\mb{S,T}]\right |} \\
&= \frac{1}{2} \log \frac{\left |\mat{\mb{S}^T \mb{K}_{\mb{X}|V^{*}} \mb{S} & \mb{S}^T \mb{K}_{\mb{X}|V^{*}} \mb{T} \\ \mb{T}^T \mb{K}_{\mb{X}|V^{*}} \mb{S} & \mb{T}^T \mb{K}_{\mb{X}|V^{*}} \mb{T}} \right |}{ \left | \mb{I}_m \right |} \\
&= \frac{1}{2} \log \left |\mat{\mb{S}^T \mb{K}_{\mb{X}|V^{*}} \mb{S} & \mb{0} \\ \mb{0} & \mb{I}_l} \right | \\
&= \frac{1}{2} \log \left |\mb{S}^T \mb{K}_{\mb{X}|V^{*}} \mb{S} \right|,
\end{align}
where
\begin{enumerate}
\item[(39)] follows because $[\mb{S,T}]$ is invertible from Theorem 2(b),
\item[(40)] follows because $[\mb{S,T}]$ is $\mb{K}_{\mb{X}|U^{*},V^{*}}$-orthogonal from Theorem 2(c), and
\item[(41)] follows because $\mb{T}$ is $\mb{K}_{\mb{X}|V^{*}}$-orthogonal, and $\mb{S}$ and $\mb{T}$ are cross $\mb{K}_{\mb{X}|V^{*}}$-orthogonal from Theorem 2(e) and 2(g), respectively.
\end{enumerate}
We now have the following theorem which is central to the converse proof of our main result.
\begin{Thm}
A Gaussian $(U,V)$ is an optimal solution of the main optimization problem $P$.
\end{Thm}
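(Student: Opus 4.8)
The plan is to show $v(P)=v(P_G)$. Since $P_G$ is exactly $P$ with the feasible set restricted to Gaussian $(U,V)$, we automatically have $v(P)\le v(P_G)$, and the optimal Gaussian pair for $P_G$ is feasible for $P$; so once $v(P)=v(P_G)$ is proved, that pair attains $v(P)$ and hence solves $P$. Everything therefore reduces to $v(P)\ge v(P_G)$, which I would obtain from the sandwich
\[
v(P_G)\ \le\ v(\tilde P)\ \le\ v(P)\ \le\ v(P_G),
\]
where $\tilde P$ is a reduced-dimensional problem obtained by projecting the main source onto the columns of $\mb S$ and enforcing the distortion constraint only in those directions.

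Concretely, set $\tilde{\mb X}=\mb S^T\mb X$, an $r$-dimensional Gaussian vector, and (using $\mb S^T\mb D\mb S=\mb I_r$ from Theorem 2(d)) define $\tilde P$ by
\begin{align*}
\min_{U,V}\ &I(\tilde{\mb X};U|V)\\
\textrm{s.t.}\ &R_2\ge I(Y;V),\quad \mb I_r\succcurlyeq\mb K_{\tilde{\mb X}|U,V}\succcurlyeq\mb 0,\quad \tilde{\mb X}\leftrightarrow Y\leftrightarrow V.
\end{align*}
The inequality $v(\tilde P)\le v(P)$ is immediate: for any $(U,V)$ feasible for $P$ we have $\mb K_{\tilde{\mb X}|U,V}=\mb S^T\mb K_{\mb X|U,V}\mb S\preccurlyeq\mb S^T\mb D\mb S=\mb I_r$, and $\mb X\leftrightarrow Y\leftrightarrow V$ with $\tilde{\mb X}$ a deterministic function of $\mb X$ gives $\tilde{\mb X}\leftrightarrow Y\leftrightarrow V$; thus $(U,V)$ is feasible for $\tilde P$, with objective $I(\tilde{\mb X};U|V)\le I(\mb X;U|V)$. (If $r=0$ the reduced problem is vacuous and $v(\tilde P)=v(P_G)=0$, so assume $r\ge1$.)

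The heart of the proof is $v(\tilde P)\ge v(P_G)$, which I would get by transporting Oohama's conditional-EPI argument into the reduced problem. Fix $(U,V)$ feasible for $\tilde P$ and write $I(\tilde{\mb X};U|V)=h(\tilde{\mb X}|V)-h(\tilde{\mb X}|U,V)$. From $\mb K_{\tilde{\mb X}|U,V}\preccurlyeq\mb I_r$ and maximum entropy, $h(\tilde{\mb X}|U,V)\le\frac{1}{2}\log\big((2\pi e)^r|\mb K_{\tilde{\mb X}|U,V}|\big)\le\frac{r}{2}\log(2\pi e)$. For the first term, write $\tilde{\mb X}=\mb b Y+\mb W$ with $\mb b=\mb S^T\mb a$ and $\mb W=\mb S^T\mb N\sim\mathcal N(\mb 0,\mb S^T\mb K_N\mb S)$; since $\mb W\perp Y$ (source model) and $\tilde{\mb X}\leftrightarrow Y\leftrightarrow V$, one checks $\mb W\perp(Y,V)$. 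Whitening $\tilde{\mb X}$ by $(\mb S^T\mb K_N\mb S)^{-1/2}$ and rotating so that the image of $\mb b$ lies along the first axis turns $\tilde{\mb X}$, via an invertible linear map, into one coordinate of the form $\alpha Y+W_1$ with $\alpha^2=\mb b^T(\mb S^T\mb K_N\mb S)^{-1}\mb b$ and $W_1$ standard Gaussian, plus $r-1$ coordinates that are standard Gaussians independent of $(Y,V)$; the change of variables contributes an additive $\frac{1}{2}\log|\mb S^T\mb K_N\mb S|$, so $h(\tilde{\mb X}|V)=h(\alpha Y+W_1|V)+\frac{r-1}{2}\log(2\pi e)+\frac{1}{2}\log|\mb S^T\mb K_N\mb S|$. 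Applying the scalar conditional EPI and $h(Y|V)\ge h(Y)-I(Y;V)\ge h(Y)-R_2$ gives $2^{2h(\alpha Y+W_1|V)}\ge\alpha^2\,2\pi e\,\sigma_Y^2 2^{-2R_2}+2\pi e$, and the matrix determinant lemma rewrites the right side as $2\pi e\,|\mb S^T\mb K_N\mb S|^{-1}\,|\mb S^T(\mb a\mb a^T\sigma_Y^2 2^{-2R_2}+\mb K_N)\mb S|=2\pi e\,|\mb S^T\mb K_N\mb S|^{-1}\,|\mb S^T\mb K_{\mb X|V^*}\mb S|$. Collecting terms yields $h(\tilde{\mb X}|V)\ge\frac{r}{2}\log(2\pi e)+\frac{1}{2}\log|\mb S^T\mb K_{\mb X|V^*}\mb S|$, hence $I(\tilde{\mb X};U|V)\ge\frac{1}{2}\log|\mb S^T\mb K_{\mb X|V^*}\mb S|$, which is exactly $v(P_G)$ by the computation just above the theorem statement. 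This closes the sandwich.

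The step I expect to be the real obstacle is this entropy-power argument. Unlike Oohama's scalar problem, $\tilde{\mb X}$ is a vector whose noise has a full-dimensional ``fresh'' part but only a rank-one ``$Y$-dependent'' part, so one must first change basis to peel off the single $Y$-direction before the classical conditional EPI can be invoked, and then reassemble the constant via the determinant lemma. A secondary technical point is that the whitening requires $\mb S^T\mb K_N\mb S\succ\mb 0$; when $\mb K_N$ is singular the affected directions of $\tilde{\mb X}$ are deterministic affine functions of $Y$ given $V$ and may be split off before applying the argument, or one can replace $\mb K_N$ by $\mb K_N+\delta\mb I_m$ and let $\delta\downarrow0$. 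The two outer inequalities and the algebraic identities are routine.
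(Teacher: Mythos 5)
Your proof is correct and follows essentially the same route as the paper: the sandwich $v(P_G)\le v(\tilde P)\le v(P)\le v(P_G)$, the reduced problem $\tilde P$ obtained by projecting onto $\mb S$, and the scalar conditional-EPI argument are all the same ingredients the paper uses (the paper packages the two sides of your entropy decomposition as subproblems $\tilde P(\mb D)$ and $\tilde P(R_2)$ with separate lemmas, citing Oohama directly for the second, whereas you execute the computation inline and whiten by $\mb S^T\mb K_{\mb N}\mb S$ rather than by $\mb S^T\mb K_{\mb X}\mb S$). One small patch is needed: the EPI step requires $\alpha>0$, i.e.\ $\mb S^T\mb a\ne\mb 0$; when $\mb S^T\mb a=\mb 0$, the paper handles this degenerate case separately (Appendix E), and your argument should too -- in that case $\tilde{\mb X}$ is independent of $(Y,V)$, so $h(\tilde{\mb X}|V)=h(\tilde{\mb X})$ and, since $\mb S^T\mb K_{\mb X|V^*}\mb S=\mb S^T\mb K_{\mb X}\mb S$, the desired bound holds trivially.
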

\begin{proof}
First note that since restricting the solution space over Gaussian distributions can only increase the optimal value of the \emph{main optimization problem} $P$, we immediately have
\begin{align*}
v(P_G) \ge v(P).
\end{align*}
So, it suffices to prove the reverse inequality
\begin{align*}
v(P_G) \le v(P).
\end{align*}
Let us define the \emph{reduced main optimization problem} $\tilde P$ as
\begin{align*}
\min_{U,V} \hspace {0.1in} &I \left (\mb{S}^T\mb{X};U|V \right) \nonumber\\
\textrm{subject to} \hspace{0.1in} &R_2 \ge I(Y;V) \\
&\mb{S}^T\mb{D S} \succcurlyeq \mb{S}^T\mb{K}_{\mb{X}|U,V} \mb{S} \nonumber\\
&\mb{S}^T\mb{X} \leftrightarrow Y \leftrightarrow V.\nonumber
\end{align*}
We will show that the \emph{main optimization problem} $P$ is lower bounded by the \emph{reduced main optimization problem} $\tilde P$. Since $[\mb{S,T}]$ is invertible from Theorem 2(b) and the mutual information is non-negative, we have
\begin{align}
I(\mb{X};U|V) &= I\left ( \left [\mb{S,T} \right]^T \mb{X};U|V \right ) \nonumber\\
&=I \left(\mb{S}^T \mb{X}, \mb{T}^T \mb{X};U|V \right) \nonumber\\
&= I \left (\mb{S}^T \mb{X};U|V \right) + I \left(\mb{T}^T \mb{X};U|V,\mb{S}^T \mb{X} \right) \nonumber\\
&\ge I \left(\mb{S}^T \mb{X};U|V \right).
\end{align}
Note that any $(U,V)$ satisfying
\begin{align*}
&\mb{D} \succcurlyeq \mb{K}_{\mb{X}|U,V} \\
&\mb{X} \leftrightarrow Y \leftrightarrow V
\end{align*}
also satisfies
\begin{align*}
&\mb{S}^T \mb{D}\mb{S} \succcurlyeq \mb{S}^T\mb{K}_{\mb{X}|U,V} \mb{S} \\
&\mb{S}^T \mb{X} \leftrightarrow Y \leftrightarrow V.
\end{align*}
Therefore, the feasible set of $P$ is contained in that of $\tilde P$. Moreover, (43) above implies that the objective of $P$ is no less than that of $\tilde P$. We therefore have that the \emph{reduced main optimization problem} $\tilde P$ lower bounds the \emph{main optimization problem }$P$, i.e.
\begin{align}
v(P) &\ge v(\tilde P).
\end{align}
The objective of $\tilde P$ can be decomposed as
\begin{align}
I \left (\mb{S}^T\mb{X};U|V\right ) = I \left (\mb{S}^T\mb{X};U,V\right ) - I \left (\mb{S}^T\mb{X};V\right ).
\end{align}
We now define two subproblems that are used to lower bound the \emph{reduced main optimization problem} $\tilde P$. The first subproblem $\tilde P(\mb{D})$ minimizes the first mutual information in the right-hand-side of (45) subject to the distortion constraint in $\tilde P$ and the second subproblem $\tilde P(R_2)$ maximizes the second mutual information in the right-hand-side of (45) subject to the rate constraint and the Markov condition in $\tilde P$. In other words, $\tilde P(\mb{D})$ is defined as
\begin{align*}
\min_{U,V} \hspace {0.1in} &I \left(\mb{S}^T\mb{X};U,V \right) \\
\textrm{subject to} \hspace{0.1in} &\mb{S}^T\mb{D S} \succcurlyeq \mb{S}^T\mb{K}_{\mb{X}|U,V} \mb{S},
\end{align*}
and $\tilde P(R_2)$ is defined as
\begin{align*}
\max_{V} \hspace {0.1in} &I \left (\mb{S}^T\mb{X};V \right) \\
\textrm{subject to} \hspace{0.1in} &R_2 \ge I(Y;V) \\
&\mb{S}^T\mb{X} \leftrightarrow Y \leftrightarrow V.
\end{align*}
It is clear from the decomposition in (45) and from the definitions of $\tilde P, \tilde P(\mb{D})$ and $\tilde P(R_2)$ that $\tilde P(\mb{D})$ and $\tilde P(R_2)$ lower bound $\tilde P$, i.e.
\begin{align}
v(\tilde P) &\ge v\left (\tilde P(\mb{D}) \right ) - v\left (\tilde P(R_2)\right ).
\end{align}
We now give two Lemmas about the optimal solutions to subproblems $\tilde P(\mb{D})$ and $\tilde P(R_2)$.
\begin{Lem} A Gaussian $(U,V)$ with the conditional covariance matrix $\mb{K}_{\mb{X}|U^{*},V^{*}}$ is optimal for the subproblem $\tilde P(\mb{D})$, and the optimal value is
\begin{align}
v\left(\tilde P(\mb{D}) \right ) &= \frac{1}{2} \log \left | \mb{S}^T\mb{K_X} \mb{S} \right |.
\end{align}
\end{Lem}
\begin{proof}
See Appendix D.
\end{proof}
\begin{Lem} A Gaussian $V$ with the conditional variance $\sigma^2_{Y|V^{*}}$ is optimal for the subproblem $\tilde P(R_2)$, and the optimal value is
\begin{align}
v\left(\tilde P(R_2) \right) &= \frac{1}{2} \log \frac {\left | \mb{S}^T\mb{K_X} \mb{S} \right |}{\left | \mb{S}^T\mb{K}_{\mb{X}|V^{*}} \mb{S} \right |}.
\end{align}
\end{Lem}
\begin{proof}
See Appendix E.
\end{proof}
Substituting (47) and (48) into (46), we get
\begin{align}
v(\tilde P) &\ge \frac{1}{2} \log \left | \mb{S}^T\mb{K}_{\mb{X}|V^{*}} \mb{S} \right |.
\end{align}
From (42), (44) and (49), we have
\begin{align*}
v(P) &\ge v(P_G),
\end{align*}
which means that a Gaussian $(U,V)$ is optimal for the \emph{main optimization problem} $P$.
\end{proof}
We are now ready to prove the converse of Theorem 1. Suppose $(R_1,R_2, \mb{D})$ is achievable, then
\begin{align}
R_1 &\ge v(P) \\
&=v(P_G) \\
&= \frac{1}{2} \log \left |\mb{K}_{\mb{X}|V^{*}} \right | - \frac{1}{2} v \left (F \left (\mb{D}, \mb{K}_{\mb{X}|V^{*}} \right ) \right)\\
&= \min_{\mb{K}} \hspace {0.1in} \frac{1}{2} \log \frac{ \left |\mb{a} \mb{a}^{T} \sigma_{Y}^2 2^{-2 R_2} + \mb{K_{N}} \right |}{|\mb{K}|} \nonumber\\
&\hspace{0.25in}\textrm{s. t.} \hspace{0.1in} \mb{0} \preccurlyeq \mb{K} \preccurlyeq \mb{D}\\
&\hspace{0.55in}\mb{K} \preccurlyeq \mb{a} \mb{a}^{T} \sigma_{Y}^2 2^{-2 R_2} + \mb{K_N}\nonumber,
\end{align}
where
\begin{enumerate}
\item[(50)] follows from Lemma 2,
\item[(51)] follows from Theorem 3,
\item[(52)] follows from (36), and
\item[(53)] follows from the definition of $F$ and (37).
\end{enumerate}
And if $(R_1,R_2,\mb{D}) \in \hat {\mathcal{R}}$, then (53) again holds because (52) is continuous in $(R_2, \mb{D})$. This completes the converse proof of Theorem 1.

\section{The General Source-Coding Problem}
In this section, we study the general source-coding problem by imposing separate distortion constraints on both the sources. The mathematical formulation of the problem remains the same as in Section 3.1. The only change is in the decoder which now uses the received messages from the encoders to estimate both $\mb{X}^n$ and $Y^n$ using the decoding functions
\begin{align*}
g_1^{(n)}  &: \left \{1,\dots,M_1^{(n)} \right \} \times \left \{1,\dots,M_2^{(n)} \right \} \mapsto \mathbb{R}^{mn}, \hspace{0.1in} \textrm{and}\\
g_2^{(n)}  &: \left \{1,\dots,M_1^{(n)} \right \} \times \left \{1,\dots,M_2^{(n)} \right \} \mapsto \mathbb{R}^{n},
\end{align*}
respectively.
\begin{Def}
A rate-distortion vector $\left (R_1,R_2, \mb{D}, d \right )$, where $\mb{D}$ is a positive definite matrix, is \emph{achievable} for the general source-coding problem if there exists a block length $n$, encoders $f_1^{(n)} $ and $f_2^{(n)} $, and a decoder $\left (g_1^{(n)}, g_2^{(n)}\right ) $ such that
\begin{align*}
R_i &\ge \frac{1}{n} \log M_i^{(n)}  \hspace {0.15 cm} \textrm{for all} \hspace {0.15 cm} i \in \{1,2\}, \\
\mb{D} &\succcurlyeq \frac{1}{n} \sum_{i=1}^n E \left [ \left (\mb{X}_i - \hat {\mb{X}}_i \right ) \left (\mb{X}_i - \hat {\mb{X}}_i \right )^T \right ], \hspace {0.15 cm} \textrm{and} \\
d &\ge \frac{1}{n} \sum_{i=1}^n E \left [ \left (Y_i - \hat Y_i \right )^2 \right ],
\end{align*}
where
\begin{align*}
\hat {\mb{X}}^n &= g_1^{(n)} \left (f_1^{(n)} \left (\mb{X}^n \right ), f_2^{(n)}\left (Y^n \right ) \right ) \\
&= E \left [\mb{X}^n | f_1^{(n)} \left (\mb{X}^n \right ), f_2^{(n)}\left (Y^n \right )\right ],  \hspace {0.15 cm} \textrm{and} \\
\hat Y^n &= g_2^{(n)} \left (f_1^{(n)} \left (\mb{X}^n \right ), f_2^{(n)}\left (Y^n \right ) \right ) \\
&= E \left [Y^n | f_1^{(n)} \left (\mb{X}^n \right ), f_2^{(n)}\left (Y^n \right )\right ].
\end{align*}
Let $\tilde {\mathcal{R}}$ be the closure of the set of all achievable rate-distortion vectors. Define
\[
\mathcal{R} \left (\mb{D},d \right) = \left \{ (R_1,R_2) : \left (R_1,R_2, \mb{D}, d \right ) \in \tilde {\mathcal{R}} \right\}.
\]
We call $\mathcal{R} \left (\mb{D},d \right)$ the rate region for the general source-coding problem.
\end{Def}

We can assume without loss of generality that the components $(X_1,\dots,X_m)$ of $\mb{X}$ and $Y$ are standard normal, and $(X_1,Y)$ is independent of $(X_2,\dots,X_m)$. Starting from any problem, we can get to an equivalent problem with this structure by applying an invertible transformation on the sources and by considering the equivalent distortion constraints \cite{Chao, Globerson}. This can be done as follows. Let
\[
\mb{u}_1, \mb{u}_2, \dots, \mb{u}_m
\]
be an orthonormal basis in $\mathbb{R}^m$ starting at
\begin{align*}
\mb{u}_1 =  \frac{1}{\rho} \left (\sigma_Y\mb{K}_{\mb{X}}^{-1/2} \mb{a} \right ),
\end{align*}
where
\[
\rho = \left \|\sigma_Y\mb{K}_{\mb{X}}^{-1/2} \mb{a} \right \|.
\]
Define the matrices
\begin{align*}
\mb{U} &= \left [\mb{u}_1, \mb{u}_2, \dots, \mb{u}_m \right ]\\
\mb{T}_{\mb{X}} &= \mb{U}^T \mb{K}_{\mb{X}}^{-1/2}.
\end{align*}
Then the transformation is given by
\begin{align*}
\tilde {\mb{X}} &= \mb{T}_{\mb{X}} \mb{X}\\
\tilde Y &= \frac{1}{\sigma_Y} Y.
\end{align*}
The covariance matrix of $\tilde {\mb{X}}$ is
\begin{align*}
\mb{K}_{\tilde {\mb{X}}} &=  \mb{T}_{\mb{X}} \mb{K_X} \mb{T}_{\mb{X}}^T\\
&= \mb{U}^T \mb{K}_{\mb{X}}^{-1/2}  \mb{K_X} \mb{K}_{\mb{X}}^{-1/2} \mb{U}\\
&= \mb{U}^T \mb{U} \\
&= \mb{I}_m,
\end{align*}
and the cross-covariance between $\tilde {\mb{X}}$ and $\tilde Y$ is
\begin{align*}
\mb{K}_{\tilde {\mb{X}} \tilde Y} &=\frac{1}{\sigma_Y}\mb{T}_{\mb{X}} \mb{K}_{{\mb{X}}Y} \\
&=\frac{1}{\sigma_Y} \left (\mb{U}^T \mb{K}_{\mb{X}}^{-1/2} \right) \left (\sigma^2_Y \mb{a} \right) \\
&= \mb{U}^T \left (\sigma_Y \mb{K}_{\mb{X}}^{-1/2} \mb{a} \right)\\
&= \mb{U}^T \left (\rho \mb{u}_1\right)  \\
&=\left (\rho ,0, \dots, 0\right)^T.
\end{align*}
Now, it is easy to verify that the equivalent distortion constraints are
\begin{align*}
\mb{T}_{\mb{X}} \mb{D} \mb{T}_{\mb{X}}^T &\succcurlyeq \frac{1}{n} \sum_{i=1}^n E \left [ \left (\tilde {\mb{X}}_i - \hat {\tilde {\mb{X}}}_i \right ) \left (\tilde {\mb{X}}_i - \hat {\tilde {\mb{X}}}_i \right )^T \right ], \hspace {0.15 cm} \textrm{and} \\
\frac{d}{\sigma^2_Y} &\ge \frac{1}{n} \sum_{i=1}^n E \left [ \left (\tilde Y_i - \hat {\tilde Y}_i \right )^2 \right ].
\end{align*}
Since the above transformation is invertible, it does not incur any information loss. We therefore have an equivalent structured problem. So from now on, we will assume that our original problem has this structure with $\rho$ being the correlation coefficient between $X_1$ and $Y$. 

\subsection{An Outer Bound}
First note that if there is no distortion constraint between $Y^n$ and $\hat Y^n$, then the problem reduces to the Gaussian scalar-help-vector source-coding problem, and hence we have
\begin{align}
\mathcal{R}(\mb{D},d) \subseteq \mathcal{R}^{*}(\mb{D}).
\end{align}
This bound is tight for large $R_2$ because the distortion constraint between $Y^n$ and $\hat Y^n$ is always satisfied for large $R_2$. It prompts us to consider another relaxed problem in which there is no distortion constraint between $\mb{X}^n$ and $\hat {\mb{X}}^n$, and obtain another outer bound to the rate region. However, the outer bound thus obtained is not tight in general even for large $R_1$. This is because the optimal solution to this relaxed problem is such that the first encoder sends information about $X_1$ only. The rest of the components of $\mb{X}$ are simply ignored and therefore the distortion constraint between $\mb{X}^n$ and $\hat {\mb{X}}^n$ is not met in general. We can obtain an improved outer bound to the rate region by splitting the first encoder's rate into two. The first split of the rate comes from the rate region of the quadratic Gaussian two-encoder source-coding problem for the sources $X_1$ and $Y$ under appropriate individual distortion constraints (\cite{Oohama},\cite{Wagner}). The second split is the point-to-point rate-distortion function for the vector source $(X_2,\dots,X_m)$ under an appropriate covariance matrix distortion constraint. By combining this outer bound with that in (54), we obtain a composite outer bound to the rate region.

Let us denote $(X_2,\dots,X_m)^T$ by $\bar {\mb{X}}$ and let
\[
\mb{D} = \mat{D_1 & \mb{b}^T \\ \mb{b} & \bar {\mb{D}}},
\]
where $D_1$ is a positive number, $\mb{b}$ is a $(m-1)$-dimensional vector and $\bar {\mb{D}}$ is a $(m-1) \times (m-1)$ positive definite matrix. Let $\bar R_1(\bar {\mb{D}})$ be the point-to-point rate-distortion function of the source $\bar {\mb{X}}$ under a covariance matrix distortion constraint $\bar {\mb{D}}$. Then from the discussion in Section 4, we have
\begin{align}
\bar R_1(\bar {\mb{D}}) &=-\frac{1}{2} v\left ( F \left (\bar {\mb{D}},\mb{I}_{m-1} \right )\right ) \nonumber \\
&= -\frac{1}{2}\log | \bar {\mb{D}}^{*} |,
\end{align}
where $\bar{\mb{D}}^{*}$ is the optimal solution to problem $F \left (\bar {\mb{D}},\mb{I}_{m-1} \right )$ defined in (1). Define the sets
\begin{align*}
\mathcal{R}^{*}_2(d) &= \left \{(R_1,R_2) : R_2 \ge \frac{1}{2} \log^{+} \left [\frac{1}{d} \left ( 1- \rho^2 + \rho^2 2^{- 2 \left (R_1-\bar R_1(\bar {\mb{D}})\right)} \right ) \right ]\right \}, \\
\mathcal{R}^{*}_{\textrm{sum}}(D_1,d) &= \left \{(R_1,R_2) : R_1-\bar R_1(\bar {\mb{D}})+R_2 \ge \frac{1}{2} \log^{+} \left [\frac{\left ( 1- \rho^2 \right ) \beta (D_1 , d)}{2 D_1 d} \right ]\right \},
\end{align*}
where $\log^{+} x = \max (\log x, 0)$, and
\[
\beta (D_1 , d) = 1 + \sqrt{1+\frac{4 \rho^2 D_1 d}{\left(1-\rho^2 \right)^2}}.
\]
We now have the following outer bound to the rate region of the general source-coding problem.
\begin{Thm}
For every positive definite matrix $\mb{D}$ and for every positive number $d$
\begin{align}
\mathcal{R}(\mb{D},d) \subseteq \mathcal{R}^{*}(\mb{D}) \cap \mathcal{R}^{*}_2(d) \cap \mathcal{R}^{*}_{\textrm{\emph{sum}}}(D_1,d).
\end{align}
\end{Thm}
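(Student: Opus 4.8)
The plan is to treat the three intersected regions separately. The inclusion $\mathcal{R}(\mb{D},d)\subseteq\mathcal{R}^{*}(\mb{D})$ is the relaxation already recorded in (54): deleting the distortion constraint on $Y^{n}$ can only enlarge the achievable set, and the relaxed problem is exactly the Gaussian scalar-help-vector problem, whose rate region is $\mathcal{R}^{*}(\mb{D})$ by Theorem 1. For the other two regions, the idea is to split the first encoder's rate into a part describing the sufficient statistic $X_{1}$ (for $Y$ given $\mb{X}$) and a part describing the remaining coordinates $\bar{\mb{X}}$, to lower bound the latter by the point-to-point rate-distortion function $\bar R_{1}(\bar{\mb{D}})$ of Section 4, and to bound the former together with $R_{2}$ by the known converse of the scalar quadratic Gaussian two-encoder problem with two distortion constraints.

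Fix an achievable $(R_{1},R_{2},\mb{D},d)$ with a block-length-$n$ code, and set $J_{1}=f_{1}^{(n)}(\mb{X}^{n})$, $J_{2}=f_{2}^{(n)}(Y^{n})$. In the normalized coordinates of Section 6, $\bar{\mb{X}}^{n}$ is i.i.d. $\mathcal{N}(\mb{0},\mb{I}_{m-1})$ and is independent of $(X_{1}^{n},Y^{n},J_{2})$. Restricting the distortion inequality $\mb{D}\succcurlyeq\frac{1}{n}\sum_{i}E[(\mb{X}_{i}-\hat{\mb{X}}_{i})(\mb{X}_{i}-\hat{\mb{X}}_{i})^{T}]$ to its lower-right $(m-1)\times(m-1)$ block shows that $\hat{\bar{\mb{X}}}^{n}$ (the last $m-1$ coordinates of $g_{1}^{(n)}(J_{1},J_{2})$) meets $\bar{\mb{D}}\succcurlyeq\frac{1}{n}\sum_{i}E[(\bar{\mb{X}}_{i}-\hat{\bar{\mb{X}}}_{i})(\bar{\mb{X}}_{i}-\hat{\bar{\mb{X}}}_{i})^{T}]$. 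Since $\bar{\mb{X}}^{n}\leftrightarrow(J_{1},J_{2})\leftrightarrow\hat{\bar{\mb{X}}}^{n}$, $\bar{\mb{X}}^{n}$ is independent of $J_{2}$, and the point-to-point converse for $\bar{\mb{X}}$ together with (55) applies, we get
\begin{align*}
nR_{1} &\ge H(J_{1}) \ge I(\mb{X}^{n};J_{1}|J_{2}) \\
&= I(\bar{\mb{X}}^{n};J_{1},J_{2}) + I(X_{1}^{n};J_{1}|J_{2},\bar{\mb{X}}^{n}) \\
&\ge n\bar R_{1}(\bar{\mb{D}}) + I(X_{1}^{n};J_{1}|J_{2},\bar{\mb{X}}^{n}),
\end{align*}
so $R_{1}-\bar R_{1}(\bar{\mb{D}})\ge\frac{1}{n}I(X_{1}^{n};J_{1}|J_{2},\bar{\mb{X}}^{n})$, while $R_{2}\ge\frac{1}{n}I(Y^{n};J_{2})$. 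The point is that $\bar R_{1}(\bar{\mb{D}})$ bits of encoder $1$'s rate are spent describing $\bar{\mb{X}}^{n}$, which is worthless for the $(X_{1},Y)$ sub-problem because $J_{2}$ carries no information about $\bar{\mb{X}}^{n}$; what remains must suffice for that sub-problem.

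It then remains to show that the residual pair $(R_{1}-\bar R_{1}(\bar{\mb{D}}),R_{2})$, with distortion targets $(D_{1},d)$ on $(X_{1},Y)$, obeys the converse bounds of the quadratic Gaussian two-encoder source-coding problem with two distortion constraints, namely the relevant helper-type and sum-rate bounds of \cite{Oohama,Wagner}, which after the substitution $R_{1}\mapsto R_{1}-\bar R_{1}(\bar{\mb{D}})$ are exactly $\mathcal{R}^{*}_{2}(d)$ and $\mathcal{R}^{*}_{\textrm{sum}}(D_{1},d)$. To do this I would re-run the single-letter converse of \cite{Oohama,Wagner} for $(X_{1},Y)$, but carrying $\bar{\mb{X}}^{n}$ as an extra conditioning variable in every mutual-information term that involves $J_{1}$. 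Because $\bar{\mb{X}}^{n}$ is i.i.d. and independent of $(X_{1}^{n},Y^{n},J_{2})$, the Markov relation $X_{1}\leftrightarrow Y\leftrightarrow V$, the single-letterization, and the entropy-power and scalar-Gaussian estimates of \cite{Oohama,Wagner} are undisturbed, and the per-letter distortion terms $E[(X_{1,i}-\hat X_{1,i})^{2}]$, $E[(Y_{i}-\hat Y_{i})^{2}]$ are unaffected by the conditioning; feeding in $\frac{1}{n}I(X_{1}^{n};J_{1}|J_{2},\bar{\mb{X}}^{n})\le R_{1}-\bar R_{1}(\bar{\mb{D}})$ and $\frac{1}{n}I(Y^{n};J_{2})\le R_{2}$, optimizing over the single-letter auxiliaries as in \cite{Oohama,Wagner}, and letting $n\to\infty$ (using closedness of $\mathcal{R}(\mb{D},d)$ and continuity of $\mathcal{R}^{*}_{2},\mathcal{R}^{*}_{\textrm{sum}}$ in $(R_{1},R_{2},D_{1},d)$) yields $(R_{1},R_{2})\in\mathcal{R}^{*}_{2}(d)\cap\mathcal{R}^{*}_{\textrm{sum}}(D_{1},d)$. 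Intersecting with (54) proves the theorem.

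The main obstacle is this last step: verifying that the converse machinery of \cite{Oohama,Wagner} is genuinely robust to conditioning on the independent ``genie'' $\bar{\mb{X}}^{n}$, so that the extremal single-letter distributions and the scalar-Gaussian bounds are not altered, and that after $R_{1}\mapsto R_{1}-\bar R_{1}(\bar{\mb{D}})$ the two scalar bounds --- which come from two different papers and are proved by somewhat different techniques --- combine into precisely $\mathcal{R}^{*}_{2}(d)\cap\mathcal{R}^{*}_{\textrm{sum}}(D_{1},d)$ with $\bar R_{1}(\bar{\mb{D}})=-\frac{1}{2}\log|\bar{\mb{D}}^{*}|$ identified from Section 4. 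A minor point to record is that retaining only the $\bar{\mb{X}}$-block of the covariance distortion constraint is lossless for lower-bounding the rate needed to describe $\bar{\mb{X}}^{n}$, since that block constraint is implied by $\mb{D}\succcurlyeq(\cdot)$ and $\bar R_{1}$ is monotone in its argument.
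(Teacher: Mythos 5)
Your treatment of $\mathcal{R}(\mb{D},d)\subseteq\mathcal{R}^{*}(\mb{D})$ is correct and matches the paper, and the high-level idea of splitting encoder $1$'s rate into a $\bar{\mb{X}}$-part and an $X_1$-part is the same as the paper's. The problem is in the specific decomposition you choose. You start from $nR_1\ge I(\mb{X}^n;J_1|J_2)$ and split it as $I(\bar{\mb{X}}^n;J_1,J_2)+I(X_1^n;J_1|J_2,\bar{\mb{X}}^n)$, which leaves you with the constraint $R_1-\bar R_1(\bar{\mb{D}})\ge\frac{1}{n}I(X_1^n;J_1|J_2,\bar{\mb{X}}^n)$. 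That quantity is conditioned on $J_2$, and this breaks the subsequent scalar converse. Oohama's and Wagner's EPI-based arguments require the helper rate to bound $I(X_1^n;J_1)$ (unconditioned), so that $h(X_1^n|J_1)$ is lower bounded, and so that the conditional EPI for $Y=\rho X_1+N$ can be applied conditioned on a variable ($J_1$) that is a function of $X_1^n$ alone. Conditioning additionally on $J_2=f_2(Y^n)$ destroys the independence structure needed for Bergmans' conditional EPI, and conditioning on $J_2$ need not give $I(X_1^n;J_1|J_2,\bar{\mb{X}}^n)\ge I(X_1^n;J_1)$ in the first place, so you cannot simply drop $J_2$ afterwards. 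Your stated plan, to carry only $\bar{\mb{X}}^n$ as a genie, does not resolve this, since the rate inequality you derived carries $J_2$ as well.

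The paper avoids this issue by starting from the unconditioned $nR_1\ge H(C_1)\ge I(\mb{X}^n;C_1)=I(X_1^n;C_1)+I(\bar{\mb{X}}^n;C_1|X_1^n)$, so the scalar term is exactly $I(X_1^n;C_1)$ with no extra conditioning, which is what Oohama's one-helper converse consumes directly. The point where you need $J_2$ to appear (to invoke the rate-distortion converse for $\bar{\mb{X}}$) is handled instead by the Markov chain $\bar{\mb{X}}^n\leftrightarrow(C_1,X_1^n)\leftrightarrow Y^n$: since $\bar{\mb{X}}^n\perp X_1^n$, one has $I(\bar{\mb{X}}^n;C_1|X_1^n)=I(\bar{\mb{X}}^n;C_1,X_1^n)=I(\bar{\mb{X}}^n;C_1,X_1^n,Y^n,C_2)\ge I(\bar{\mb{X}}^n;C_1,C_2)\ge I(\bar{\mb{X}}^n;\hat{\bar{\mb{X}}}^n)\ge n\bar R_1(\bar{\mb{D}})$. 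Analogously, for the sum rate the paper writes $n(R_1+R_2)\ge I(\mb{X}^n,Y^n;C_1,C_2)=I(X_1^n,Y^n;C_1,C_2)+I(\bar{\mb{X}}^n;C_1|X_1^n)$ and lower bounds the two pieces separately. This gives exactly the quantities $I(X_1^n;C_1)$ and $I(X_1^n,Y^n;C_1,C_2)$ that the cited scalar converses bound, with no need to re-derive those converses conditionally. You should adopt the unconditioned decomposition and the Markov-chain chain of equalities to bring $C_2$ into the $\bar{\mb{X}}$-term rather than conditioning on $J_2$ from the outset.
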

\begin{proof}
Consider $(R_1,R_2) \in \mathcal{R}(\mb{D},d)$. Let $C_1 = f_1^{(n)} \left (\mb{X}^n \right )$ and $C_2 =f_2^{(n)} (Y^n).$
Then
\begin{align}
n R_2 &\ge \log M_2^{(n)} \nonumber\\
&\ge H(C_2) \nonumber\\
&\ge H(C_2|C_1) \nonumber\\
&\ge I(Y^n;C_2|C_1) \nonumber\\
&= I(Y^n;C_1,C_2) - I(Y^n;C_1) \nonumber\\
&\ge I(Y^n;\hat Y^n) - I(Y^n;C_1), \\
n R_1 &\ge \log M_1^{(n)} \nonumber\\
&\ge H(C_1) \nonumber\\
&\ge I(\bar {\mb{X}}^n;C_1) \nonumber\\
&= I(X_1^n;C_1)+I(\bar {\mb{X}}^n;C_1|X_1^n).
\end{align}
We can lower bound the second mutual information in (58) as follows
\begin{align}
I(\bar {\mb{X}}^n;C_1|X_1^n)&= I(\bar {\mb{X}}^n;C_1,X_1^n)\nonumber\\
&= I \left(\bar {\mb{X}}^n;C_1,X_1^n,Y^n \right) \\
&= I \left(\bar {\mb{X}}^n;C_1,C_2,X_1^n,Y^n \right) \nonumber\\
&\ge I \left (\bar {\mb{X}}^n;C_1,C_2 \right) \nonumber\\
&\ge I\left(\bar {\mb{X}}^n;\hat {\bar {\mb{X}}}^n\right),
\end{align}
where (59) follows because
\[
\bar {\mb{X}}^n \leftrightarrow (C_1,X_1^n) \leftrightarrow Y^n.
\]
Define the following optimization problem
\begin{align}
\min_{C_1} \hspace {0.1in} &\frac{1}{n}I \left(\bar {\mb{X}}^n;\hat {\bar {\mb{X}}}^n \right) \nonumber\\
\textrm{subject to} \hspace{0.1in} &\frac{1}{n} \sum_{i=1}^n E \left [ \left (\bar {\mb{X}}_i - \hat {\bar {\mb{X}}}_i \right ) \left (\bar{\mb{X}}_i - \hat {\bar {\mb{X}}}_i \right )^T \right ] \preccurlyeq \bar {\mb{D}}.
\end{align}
This is the point-to-point rate-distortion problem for the source $\bar {\mb{X}}^n$ under a covariance matrix distortion constraint $\bar {\mb{D}}$. Therefore from the discussion in Section 4, a Gaussian $C_1$ is optimal for this problem and from (55), the optimal conditional covariance matrix is $\bar {\mb{D}}^{*}$ and the optimal value $\bar R_1(\bar {\mb{D}})$. From (57), (58), (60) and the definition of the optimization problem (61), we obtain that $(R_1,R_2)$ satisfies
\begin{align*}
R_2 &\ge \frac{1}{n}I(Y^n;\hat Y^n) - \frac{1}{n}I(Y^n;C_1) \\
R_1-\bar R_1(\bar {\mb{D}}) &\ge \frac{1}{n}I(X_1^n;C_1) \\
d &\ge \frac{1}{n} \sum_{i=1}^n E \left [ \left (Y_i - \hat Y_i \right )^2 \right ].
\end{align*}
By invoking Oohama's lower bounding technique \cite{Oohama} next, we obtain
\begin{align*}
R_2 \ge \frac{1}{2} \log^{+} \left [\frac{1}{d} \left ( 1- \rho^2 + \rho^2 2^{- 2 (R_1-\bar R_1(\bar {\mb{D}}))} \right ) \right ],
\end{align*}
which implies that
\begin{align}
(R_1, R_2) \in \mathcal{R}^{*}_2(d).
\end{align}
We now proceed to lower bound the sum-rate.
\begin{align}
n (R_1+R_2) &\ge H(C_1,C_2) \nonumber\\
&\ge I(\mb{X}^n,Y^n;C_1,C_2) \nonumber\\
&= I(X_1^n,Y^n;C_1,C_2)+I(\bar {\mb{X}}^n;C_1,C_2|X_1^n,Y^n) \nonumber\\
&= I(X_1^n,Y^n;C_1,C_2)+I(\bar {\mb{X}}^n;C_1|X_1^n) \nonumber\\
&= I(X_1^n,Y^n;C_1,C_2)+I(\bar {\mb{X}}^n;\hat {\bar {\mb{X}}}^n),
\end{align}
where (63) follows from (60). The sum-rate can be lower bounded further by minimizing two mutual informations in (63) separately subject to separate distortion constraints. Using the sum-rate lower bounding technique by Wagner \emph{et al.} \cite{Wagner}, the first mutual information is minimized subject to the distortion constraints $D_1$ and $d$ on the sources $X_1^n$ and $Y^n$, respectively. We omit the details to avoid repetition. Minimizing the second mutual information subject to the covariance matrix distortion constraint $\bar {\mb{D}}$ is the optimization problem (61) again.  We therefore conclude that
\begin{align*}
R_1-\bar R_1(\bar {\mb{D}})+R_2 \ge \frac{1}{2} \log^{+} \left [\frac{\left ( 1- \rho^2 \right ) \beta (D_1 , d)}{2 D_1 d} \right ],
\end{align*}
which implies that
\begin{align}
(R_1, R_2) \in \mathcal{R}^{*}_{\textrm{sum}}(D_1,d).
\end{align}
(54), (62) and (64) together establish the outer bound (56). This completes the proof of Theorem 4.
\end{proof}
\subsection{Tightness of the Outer Bound}
We will prove that the boundary of the rate region $\mathcal{R}(\mb{D},d)$ partially coincides with the boundary of $\mathcal{R}^{*}(\mb{D})$ in general, coincides with the outer bound (56) completely if $\mb{b}=\mb{0}$, and partially coincides with the boundary of $\mathcal{R}^{*}_2(d)$ if $\mb{b}\neq\mb{0}$ and a condition holds. Let
\[
R_1^{*} = \inf \left\{R_1 : R_1 > \bar R_1(\bar {\mb{D}}) + \frac{1}{2} \log \frac{1}{D_1} \hspace{0.05in} \textrm{and} \hspace{0.05in}\bar {\mb{D}} - \bar {\mb{D}}^{*} \succcurlyeq  \frac{\mb{b} \mb{b}^T}{D_1 - 2^{-2 \left(R_1 - \bar R_1(\bar {\mb{D}}) \right)}}  \right\}.
\]
We have the following lemma.
\begin{Lem}
\begin{enumerate}
\item[(a)] There exists a positive number $R_2^{*}$ such that
\begin{align}
\mathcal{R}(\mb{D},d) \cap \left \{R_2 \ge R_2^{*} \right\} = \mathcal{R}^{*}(\mb{D}) \cap \left \{R_2 \ge R_2^{*} \right\}.
\end{align}
\item[(b)] If $\mb{b} = \mb{0}$, then
\begin{align}
\mathcal{R}(\mb{D},d) = \mathcal{R}^{*}(\mb{D}) \cap \mathcal{R}^{*}_2(d) \cap \mathcal{R}^{*}_{\textrm{\emph{sum}}}(D_1,d).
\end{align}
\item[(c)] If $\mb{b} \neq \mb{0}$ and $R_1^{*} < \infty$, then
\begin{align}
\mathcal{R}(\mb{D},d) \cap \left \{R_1 \ge R_1^{*} \right\} = \mathcal{R}^{*}_2(d) \cap \left \{R_1 \ge R_1^{*} \right\}.
\end{align}
\end{enumerate}
\end{Lem}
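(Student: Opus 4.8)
The plan is to establish each of the three parts by exhibiting a matching achievable scheme for the outer bound regions already derived in Theorem~4. Since Theorem~4 gives $\mathcal{R}(\mb{D},d)\subseteq\mathcal{R}^{*}(\mb{D})\cap\mathcal{R}^{*}_2(d)\cap\mathcal{R}^{*}_{\textrm{sum}}(D_1,d)$, in each regime it suffices to show that the claimed set is contained in $\mathcal{R}(\mb{D},d)$. The natural achievable scheme throughout is the Gaussian one of Fig.~2, suitably augmented so that the decoder also forms an MMSE estimate of $Y^n$: the first encoder vector-quantizes $\mb{X}^n$ through a Gaussian test channel $U$ and Slepian--Wolf bins, the second encoder quantizes $Y^n$ through a Gaussian test channel $V$, and the decoder reconstructs both $\hat{\mb{X}}^n=E[\mb{X}^n|U^n,V^n]$ and $\hat Y^n=E[Y^n|U^n,V^n]$.

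For part~(a), I would argue that for $R_2$ above some threshold $R_2^{*}$ the auxiliary $V$ can be taken rich enough that the helper's message essentially reveals $Y$ to the decoder; then the $Y$-distortion constraint $d$ is automatically met (choose $R_2^{*}$ large enough that the Gaussian-optimal $V$ for the scalar-help-vector problem already forces $\sigma^2_{Y|V}\le d$), so the only binding constraint is the covariance constraint on $\mb{X}$, and the achievable region collapses to $\mathcal{R}^{*}(\mb{D})$ by Theorem~1 and Lemma~1. This direction is the easy one and is essentially a continuity/monotonicity observation on the optimal Gaussian scheme.

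For part~(b), when $\mb{b}=\mb{0}$ the matrix $\mb{D}$ is block diagonal, so the distortion constraint on $\mb{X}$ decouples into a scalar constraint $D_1$ on $X_1$ and a covariance constraint $\bar{\mb{D}}$ on $\bar{\mb{X}}$. The first encoder then splits its rate: one part, of size $\bar R_1(\bar{\mb{D}})$, is spent point-to-point on $\bar{\mb{X}}$ exactly as in the optimization problem~(61), and the remaining part is used jointly with the helper on the correlated pair $(X_1,Y)$. For the $(X_1,Y)$ sub-problem with two distortion constraints $D_1$ and $d$ the rate region is known completely from Oohama~\cite{Oohama} and Wagner \emph{et al.}~\cite{Wagner}, and it is precisely $\mathcal{R}^{*}_2(d)\cap\mathcal{R}^{*}_{\textrm{sum}}(D_1,d)$ shifted by $\bar R_1(\bar{\mb{D}})$; one must also check that this scheme simultaneously satisfies $\mb{D}\succcurlyeq\mathrm{diag}(D_1,\bar{\mb{D}})\succcurlyeq$ the achieved covariance and lies in $\mathcal{R}^{*}(\mb{D})$, which follows because the decoupled Gaussian scheme is exactly the optimal one for $\mathcal{R}^{*}(\mb{D})$ when $\mb{D}$ is block diagonal. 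Matching this against the outer bound~(56) gives equality.

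For part~(c), with $\mb{b}\ne\mb{0}$ the two sub-problems no longer decouple cleanly, and the main obstacle is that the side information the decoder obtains about $X_1$ (via the helper and via the quantization of $\bar{\mb{X}}$) must be enough to realize the correlation structure encoded in $\mb{b}$ so that the full covariance constraint $\mb{D}$, not just its diagonal blocks, is met. The condition $R_1\ge R_1^{*}$, and the definition of $R_1^{*}$ via $\bar{\mb{D}}-\bar{\mb{D}}^{*}\succcurlyeq\mb{b}\mb{b}^T/(D_1-2^{-2(R_1-\bar R_1(\bar{\mb{D}}))})$, is exactly the Schur-complement condition guaranteeing that after spending $\bar R_1(\bar{\mb{D}})$ bits on $\bar{\mb{X}}$ and enough additional rate on $X_1$ to drive its conditional variance below $2^{-2(R_1-\bar R_1(\bar{\mb{D}}))}$, the resulting joint error covariance is still $\preccurlyeq\mb{D}$. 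So the plan is: take the scheme from part~(b) applied to $(X_1,Y)$ with $D_1$ replaced by the smaller achieved variance, verify via the Schur-complement inequality that the $\mb{D}$-constraint holds, and conclude that the $\mb{X}$-distortion constraint is slack so the binding constraints are only those defining $\mathcal{R}^{*}_2(d)$; combined with Theorem~4 this yields the claimed equality on $\{R_1\ge R_1^{*}\}$. The delicate point, and where I expect to spend the most care, is checking that $R_1^{*}$ as defined is the right threshold — i.e. that below it the covariance constraint genuinely bites and no Gaussian (hence no) scheme achieves the $\mathcal{R}^{*}_2(d)$ boundary, and at or above it the Schur-complement condition can be maintained while still meeting $d$.
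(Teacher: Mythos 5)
Your proposal is correct and follows essentially the same route as the paper's proof: part (a) via monotonicity of the helper's $Y$-distortion in $R_2$ combined with the outer bound $\mathcal{R}(\mb{D},d)\subseteq\mathcal{R}^{*}(\mb{D})$; part (b) via the block-diagonal decoupling of the optimization defining $\mathcal{R}^{*}(\mb{D})$ into a point-to-point problem for $\bar{\mb{X}}$ plus the shifted two-encoder region for $(X_1,Y)$; and part (c) via the Schur-complement verification that $\mathrm{diag}\bigl(2^{-2(R_1^{*}-\bar R_1(\bar{\mb{D}}))},\bar{\mb{D}}^{*}\bigr)\preccurlyeq\mb{D}$, which is exactly the computation in (70)--(72). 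The only detail you assert rather than prove is that the optimal $\mb{K}$ in (69) may be taken block diagonal when $\mb{b}=\mb{0}$, which the paper handles with a short determinant argument.
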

\begin{proof}
As explained in Section 3.3, the optimal scheme depicted in Fig. 2 is such that the second encoder vector quantizes its observation using a Gaussian test channel as in point-to-point rate-distortion theory. So, the average distortion between $Y^n$ and $\hat {Y}^n$ decreases as $R_2$ increases. Hence, there exists a positive number $R_2^{*}$ such that for any $R_2 \ge R_2^{*}$, the distortion constraint between $Y^n$ and $\hat {Y}^n$ is satisfied and therefore the region
\[
\mathcal{R}^{*}(\mb{D}) \cap \left \{R_2 \ge R_2^{*} \right\}
\]
is achievable for the general source coding problem. This along with the outer bound (54) prove the equality in (65).

For part (b), we will first prove that if $\mb{b} = \mb{0}$, i.e., $\mb{D}$ is block diagonal, then
\begin{align}
\mathcal{R}^{*}(\mb{D}) = \left \{(R_1,R_2) : R_1-\bar R_1(\bar {\mb{D}}) \ge \frac{1}{2} \log^{+} \left [\frac{1}{D_1} \left ( 1- \rho^2 + \rho^2 2^{- 2 R_2} \right ) \right ]\right \}.
\end{align}
The optimization problem in the definition of $\mathcal{R}^{*}(\mb{D})$ is
\begin{align}
\min_{\mb{K}} \hspace {0.1in} &\frac{1}{2} \log \frac{ \left |\mb{a} \mb{a}^{T} \sigma_{Y}^2 2^{-2 R_2} + \mb{K_{N}} \right |}{ \left|\mb{K} \right|} \nonumber\\
\textrm{subject to} \hspace{0.1in} &\mb{0} \preccurlyeq \mb{K} \preccurlyeq \mb{D}\\
&\mb{K} \preccurlyeq \mb{a} \mb{a}^{T} \sigma_{Y}^2 2^{-2 R_2} + \mb{K_N}.\nonumber
\end{align}
Since our problem has a special structure explained above, we have
\[
\mb{a} \mb{a}^{T} \sigma_{Y}^2 2^{-2 R_2} + \mb{K_N} = \textrm{Diag}\{1- \rho^2 + \rho^2 2^{- 2 R_2}, 1, \dots, 1\},
\]
which is an $m \times m$ diagonal matrix. Now, consider any feasible
\[
\mb{K} = \mat{K_1 & \mb{c}^T \\ \mb{c} & \bar {\mb{K}}}.
\]
where $K_1$ is a positive number, $\mb{c}$ is a $(m-1)$-dimensional non-zero vector and $\bar {\mb{K}}$ is a $(m-1) \times (m-1)$ positive definite matrix. Let us define
\[
\tilde {\mb{K}} = \mat{K_1 & \mb{0} \\ \mb{0} & \bar {\mb{K}}}.
\]
Then,
\begin{align*}
|\mb{K}| &= |\bar {\mb{K}}| (K_1 - \mb{c}^T \bar {\mb{K}}^{-1} \mb{c}) \\
&< |\bar {\mb{K}}| K_1 \\
&= |\tilde {\mb{K}}|.
\end{align*}
Therefore, without loss of optimality, we can restrict the feasible solutions to be of the following form
\[
\tilde {\mb{K}} = \mat{K_1 & \mb{0} \\ \mb{0} & \bar {\mb{K}}}.
\]
The restricted feasible set
\[
\left \{\mb{0} \preccurlyeq \tilde {\mb{K}} \preccurlyeq \mb{D} \hspace{0.1in} \textrm{and} \hspace{0.1in} \tilde {\mb{K}} \preccurlyeq \textrm{Diag}\{1- \rho^2 + \rho^2 2^{- 2 R_2}, 1, \dots, 1\} \right \}
\]
is equivalent to
\[
\left \{0 \le K_1 \le \min\left(D_1, 1- \rho^2 + \rho^2 2^{- 2 R_2}\right), \hspace{0.1in} \mb{0} \preccurlyeq \bar {\mb{K}} \preccurlyeq \bar {\mb{D}} \hspace{0.1in} \textrm{and} \hspace{0.1in} \bar {\mb{K}} \preccurlyeq  \mb{I}_{m-1} \right \}.
\]
Now, the objective of the optimization problem (69) can be re-written as
\begin{align*}
\frac{1}{2} \log \frac{ 1- \rho^2 + \rho^2 2^{- 2 R_2}}{ | \tilde {\mb{K}} |} &= \frac{1}{2} \log \frac{ 1- \rho^2 + \rho^2 2^{- 2 R_2}}{ | \bar {\mb{K}} | K_1} \\
&=\frac{1}{2} \log \frac{ 1- \rho^2 + \rho^2 2^{- 2 R_2}}{ K_1} + \frac{1}{2} \log \frac{1}{\bar {\mb{K}}}.
\end{align*}
Therefore, the optimal value of problem (69) equals the sum of the optimal values of subproblems
\begin{align*}
\min_{\bar {\mb{K}}} \hspace {0.1in} &\frac{1}{2} \log \frac{1}{|\bar {\mb{K}}|} \nonumber\\
\textrm{subject to} \hspace{0.1in} &\mb{0} \preccurlyeq \bar {\mb{K}} \preccurlyeq \bar {\mb{D}}\\
&\bar {\mb{K}} \preccurlyeq \mb{I}_{m-1}\nonumber,
\end{align*}
and
\begin{align*}
\min_{K_1} \hspace {0.1in} &\frac{1}{2} \log \frac{1- \rho^2 + \rho^2 2^{- 2 R_2}}{K_1} \nonumber\\
\textrm{subject to} \hspace{0.1in} &0 \le K_1 \le \min\left(D_1, 1- \rho^2 + \rho^2 2^{- 2 R_2}\right).
\end{align*}
The first subproblem is the point-to-point rate-distortion problem for the source $\bar {\mb{X}}$ under a distortion constraint $\bar {\mb{D}}$, and hence its optimal value is $\bar R_1(\bar {\mb{D}})$. The optimal value of the second subproblem is
\[
\frac{1}{2} \log^{+} \left [\frac{1}{D_1} \left ( 1- \rho^2 + \rho^2 2^{- 2 R_2} \right ) \right ].
\]
Thus, the optimal value of the optimization problem (69) is
\[
\bar R_1(\bar {\mb{D}}) + \frac{1}{2} \log^{+} \left [\frac{1}{D_1} \left ( 1- \rho^2 + \rho^2 2^{- 2 R_2} \right ) \right ],
\]
which proves the equality in (68). It is now easy to verify that the outer bound (56) coincides with the shifted boundary of the rate region of the quadratic Gaussian two-encoder source-coding problem \cite{Wagner}, where the shift is by the amount $\bar R_1(\bar {\mb{D}})$ in the direction of $R_1$ axis. So, by using the point-to-point rate-distortion optimal code for the source $\bar {\mb{X}}$ in conjunction with the separation-based optimal scheme for the sources $X_1$ and $Y$ \cite{Wagner}, we can achieve the outer bound. We therefore have the equality in (66).

For part (c), it suffices to show that if the conditions in Lemma 6(c) hold, then
\begin{align}
\mat{2^{-2 \left(R_1^{*} - \bar R_1(\bar {\mb{D}}) \right)} & \mb{0} \\ \mb{0} & \bar {\mb{D}}^{*}} \preccurlyeq \mat{D_1 & \mb{b}^T \\ \mb{b} & \bar {\mb{D}}} =  \mb{D}.
\end{align}
This will imply that the region
\[
\mathcal{R}^{*}_2(d) \cap \left \{R_1 \ge R_1^{*} \right\}
\]
is achievable for the general source-coding problem by using a scheme in which the source $\bar {\mb{X}}$ is encoded and decoded as in the point-to-point rate-distortion theory, and the sources $X_1$ and $Y$ are encoded and decoded as in the scalar Gaussian one-helper problem \cite{Oohama}, treating $Y$ as the main source and $X_1$ as the helper. Consider any
\[
\mb{0} \neq \mb{x} = \mat{y \\ \mb{z}} \in \mathbb{R}^m,
\]
where $y$ is a scalar and $\mb{z}$ is a $(m-1)$-dimensional vector. Then
\begin{align}
&\mb{x}^T \left[\mb{D}-\mat{2^{-2 \left(R_1^{*} - \bar R_1(\bar {\mb{D}}) \right)} & \mb{0} \\ \mb{0} & \bar {\mb{D}}^{*}}\right] \mb{x} \nonumber\\
= \hspace{0.05in}&\mat{y & \mb{z}^T} \mat{D_1-2^{-2 \left(R_1^{*} - \bar R_1(\bar {\mb{D}}) \right)} & \mb{b}^T \\ \mb{b} & \bar {\mb{D}}-\bar {\mb{D}}^{*}} \mat{y \\ \mb{z}} \nonumber\\
= \hspace{0.05in}& y^2 \left (D_1-2^{-2 \left(R_1^{*} - \bar R_1(\bar {\mb{D}}) \right)} \right) + 2 y (\mb{z}^T \mb{b})+ \mb{z}^T \left ( \bar {\mb{D}}-\bar {\mb{D}}^{*}\right) \mb{z} \nonumber\\
= \hspace{0.05in}& \left (D_1-2^{-2 \left(R_1^{*} - \bar R_1(\bar {\mb{D}}) \right)} \right) \left(y + \frac{\mb{z}^T \mb{b}}{D_1-2^{-2 \left(R_1^{*} - \bar R_1(\bar {\mb{D}}) \right)}}\right)^2 \nonumber\\
&+ \mb{z}^T \left ( \bar {\mb{D}}-\bar {\mb{D}}^{*} - \frac{\mb{b} \mb{b}^T}{D_1-2^{-2 \left(R_1^{*} - \bar R_1(\bar {\mb{D}}) \right)}}\right) \mb{z}\\
\ge \hspace{0.05in}&0,
\end{align}
where (71) and (72) follow from the conditions in Lemma 6(c). This implies that (70) holds, and hence we have the equality in (67). This completes the proof of Lemma 6.
\end{proof}
\section{Conclusion}
We determined the rate region of the Gaussian scalar-help-vector source-coding problem, and proved that the Gaussian achievable scheme is optimal for the problem. We introduced a novel way of establishing the converse. Our approach involved lower bounding the problem with a potentially reduced dimensional problem by projecting the main source and imposing the distortion constraints in certain directions determined by the optimal Gaussian scheme. The core optimization problem for the converse turns out to be the point-to-point rate-distortion problem for a vector Gaussian source under a covariance matrix distortion constraint. The properties satisfied by the optimal solution to the point-to-point problem plays an important role in our converse arguments. We also generalized our work to the general source coding problem in which there are distortion constraints on both the sources, and obtained an outer bound to the rate region. The outer bound is partially tight in general. We also studied its tightness in some nontrivial cases. \\ \\
\hspace{0.1in}\\
\Large
{\textbf{Appendix A}\hspace{0.1in} }\newline
\normalsize \\
We will use a transformation similar to the one used by Liu and Viswanath \cite{Lui}. Let us suppose that the rank of $\mb{K}_{\mb{X}}$ is $p < m$.  The spectral decomposition of $\mb{K}_{\mb{X}}$ is
\[
\mb{K}_{\mb{X}} = \mb{Q \Sigma}\mb{Q}^T,
\]
where $\mb{\Sigma}$ is an orthogonal matrix and
\[
\mb{\Sigma} = \textrm{Diag}(\alpha_1,\dots,\alpha_p,0,\dots,0)
\]
is a diagonal matrix. Let
\begin{align*}
\mb{Q}^T \mb{D Q} &= \mat{\mb{A} & \mb{B}^T \\ \mb{B} & \mb{C}},
\end{align*}
where $\mb{A}$, $\mb{B}$ and $\mb{C}$ are submatrices of dimensions $p \times p$, $ (m-p) \times p$ and $(m-p) \times (m-p)$, respectively and let
\begin{align*}
\mb{T_X} &= \mat{\mb{I}_p & -\mb{B}^T \mb{C}^{-1} \\ \mb{0} & \mb{I}_{m-p}} \mb{Q}^T.
\end{align*}
Then the transformation is give by
\begin{align*}
\bar {\mb{X}} = \mat{\tilde {\mb{X}}\\ \check {\mb{X}}} = \mb{T_X} \mb{X},
\end{align*}
where $\tilde{\mb{X}}$ is a $p$-dimensional random vector. The covariance matrix of $\bar {\mb{X}}$ is
\begin{align*}
\mb{K}_{\bar {\mb{X}}} &= \mb{T_X} \mb{K}_{\mb{X}} \mb{T}^T_{\mb{X}} \\
&=\mat{\mb{I}_p & -\mb{B}^T \mb{C}^{-1} \\ \mb{0} & \mb{I}_{m-p}} \mb{Q}^T \mb{K}_{\mb{X}} \mb{Q} \mat{\mb{I}_p & \mb{0} \\-\mb{C}^{-1}\mb{B} & \mb{I}_{m-p}} \\
&= \mat{\mb{I}_p & -\mb{B}^T \mb{C}^{-1} \\ \mb{0} & \mb{I}_{m-p}} \mb{\Sigma} \mat{\mb{I}_p & \mb{0} \\-\mb{C}^{-1}\mb{B} & \mb{I}_{m-p}} \\
&=\mb{\Sigma},
\end{align*}
which means that $\mb{K}_{\check {\mb{X}}} = \mb{0}$, and hence $\check {\mb{X}}$ is deterministic. Now,
\begin{align}
\mb{T_X} \mb{D} \mb{T}^T_{\mb{X}} &= \mat{\mb{I}_p & -\mb{B}^T \mb{C}^{-1} \\ \mb{0} & \mb{I}_{m-p}} \mb{Q}^T \mb{D} \mb{Q} \mat{\mb{I}_p & \mb{0} \\-\mb{C}^{-1}\mb{B} & \mb{I}_{m-p}} \nonumber\\
&= \mat{\mb{I}_p & -\mb{B}^T \mb{C}^{-1} \\ \mb{0} & \mb{I}_{m-p}}  \mat{\mb{A} & \mb{B}^T \\ \mb{B} & \mb{C}} \mat{\mb{I}_p & \mb{0} \\-\mb{C}^{-1}\mb{B} & \mb{I}_{m-p}} \nonumber\\
&= \mat{\mb{A}-\mb{B}^T \mb{C}^{-1}\mb{B} & \mb{0} \\\mb{0} & \mb{C}}.
\end{align}
Since $\mb{T_X}$ is invertible, the distortion constraint is equivalent to
\begin{align}
\mb{T_X} \mb{D} \mb{T}^T_{\mb{X}} &\succcurlyeq \frac{1}{n} \sum_{i=1}^n E\left [ \left (\bar {\mb{X}}_i - \hat {\bar{\mb{X}}}_i \right  ) \left (\bar{\mb{X}}_i - \hat {\bar{ \mb{X}}}_i \right )^T \right ] \nonumber\\
&=\frac{1}{n} \sum_{i=1}^n E\left [ \mat{\tilde {\mb{X}}_{i} - \hat {\tilde{\mb{X}}}_{i} \\ \mb{0}} \mat{\tilde {\mb{X}}_{i} - \hat {\tilde{\mb{X}}}_{i} \\ \mb{0}}^T \right] \nonumber\\
&=\mat{\frac{1}{n} \sum_{i=1}^n E\left [ \left (\tilde {\mb{X}}_{i} - \hat {\tilde{\mb{X}}}_{i} \right) \left (\tilde {\mb{X}}_{i} - \hat {\tilde{\mb{X}}}_{i} \right)^T \right] & \mb{0} \\ \mb{0} & \mb{0}}.
\end{align}
Since $\mb{A}-\mb{B}^T \mb{C}^{-1}\mb{B}$ and $\mb{C}$ are strictly positive definite, (73) and (74) imply that the equivalent distortion constraint is
\[
\mb{A}-\mb{B}^T \mb{C}^{-1}\mb{B} \succcurlyeq \frac{1}{n} \sum_{i=1}^n E\left [ \left (\tilde {\mb{X}}_{i} - \hat {\tilde{\mb{X}}}_{i} \right) \left (\tilde {\mb{X}}_{i} - \hat {\tilde{\mb{X}}}_{i} \right)^T \right].
\]
Finally, since $\check {\mb{X}}^n$ is deterministic, the first encoder can send information about $\tilde{\mb{X}}^n$ only. We therefore have an equivalent Gaussian scalar-help-vector source-coding problem in which the covariance matrix of the first encoder's observations
\[
\mb{K}_{\tilde {\mb{X}}} = \textrm{Diag}(\alpha_1,\dots,\alpha_p)
\]
is positive definite. Hence, we can assume without loss of generality that $\mb{K}_{{\mb{X}}}$ is strictly positive definite.\\ \\
\Large
{\textbf{Appendix B:}\hspace{0.1in} \textbf{Proof of Lemma 2}}\newline
\normalsize \\
Assume $(R_1,R_2,\mb{D})$ is achievable. Let $C_1 = f_1^{(n)} \left (\mb{X}^n \right )$ and $C_2=f_2^{(n)} (Y^n).$ We now have
\begin{align}
n R_2 &\ge \log M_2^{(n)}\nonumber\\
&\ge H(C_2) \nonumber\\
&\ge I(Y^n ; C_2) \nonumber \\
&= \sum_{i=1}^n \left [ h(Y_i | Y^{i-1}) - h(Y_i | Y^{i-1}, C_2)\right] \nonumber\\
&= \sum_{i=1}^n \left [ h(Y_i) - h(Y_i | \mb{X}^{i-1}, Y^{i-1}, C_2)\right] \\
&\ge \sum_{i=1}^n \left [ h(Y_i) - h(Y_i | \mb{X}^{i-1}, C_2)\right] \\
&= \sum_{i=1}^n \left [ h(Y_i) - h(Y_i | V_i)\right] \\
&= \sum_{i=1}^n I(Y_i;V_i),
\end{align}
where
\begin{enumerate}
\item[(75)] follows from the Markov condition $Y_i \leftrightarrow (Y^{i-1}, C_2) \leftrightarrow \mb{X}^{i-1}$,
\item[(76)] follows because conditioning reduces differential entropy, and
\item[(77)] follows by letting $V_i = (\mb{X}^{i-1}, C_2)$.
\end{enumerate}
Furthermore
\begin{align}
n R_1&\ge \log M_1^{(n)}\nonumber\\
&\ge H(C_1) \nonumber\\
&\ge H(C_1|C_2) \nonumber\\
&\ge I(\mb{X}^n;C_1|C_2) \nonumber\\
&= \sum_{i=1}^n I(\mb{X}_i;C_1|\mb{X}^{i-1},C_2) \nonumber\\
&= \sum_{i=1}^n I(\mb{X}_i;C_1|V_i).
\end{align}
Let $Q$ be a time-sharing random variable uniformly distributed over $\{ 1,\dots,n\}$ and independent of all other random variables and random vectors. Let us define
\begin{align*}
U &= (Q,C_1) \\
V &= (Q,V_Q) \\
Y &= Y_Q \\
\mb{X} &= \mb{X}_Q.
\end{align*}
Then we have from (78) that
\begin{align}
R_2 &\ge \frac{1}{n}\sum_{i=1}^n I(Y_i;V_i) \nonumber\\
&= I(Y_Q;V_Q|Q) \nonumber\\
&= I(Y_Q;V_Q,Q) - I(Y_Q;Q) \nonumber\\
&= I(Y_Q;V_Q,Q) \nonumber\\
&= I(Y;V)\nonumber
\end{align}
and from (79) that
\begin{align}
R_1&\ge \frac{1}{n} \sum_{i=1}^n I(\mb{X}_i;C_1|V_i) \nonumber\\
&= I(\mb{X}_Q;C_1|V_Q,Q) \nonumber\\
&= I(\mb{X}_Q;C_1,Q|V_Q,Q) \nonumber\\
&= I(\mb{X};U|V)\nonumber.
\end{align}
Now for each $i \in \{1,\dots,n\}$, we have a Markov chain
\[
\mb{X}_i = \mb{a} Y_i + \mb{N}_i \leftrightarrow Y_i \leftrightarrow V_i = (\mb{X}^{i-1}, C_2)
\]
because $\mb{N}_i$ is independent of $(\mb{X}^{i-1}, C_2)$. Thus, $\mb{X}, Y$ and $V$ also form a Markov chain
\begin{align*}
\mb{X}\leftrightarrow Y \leftrightarrow V.
\end{align*}
Finally, we have
\begin{align}
\mb{D} &\succcurlyeq \frac{1}{n} \sum_{i=1}^n E \left [(\mb{X}_i - \hat {\mb{X}}_i)(\mb{X}_i - \hat {\mb{X}}_i)^T \right ] \nonumber\\
&= \frac{1}{n} \sum_{i=1}^n E \left [(\mb{X}_i - E(\mb{X}_i|C_1,C_2))(\mb{X}_i - E(\mb{X}_i|C_1,C_2))^T \right ] \nonumber\\
&\succcurlyeq \frac{1}{n} \sum_{i=1}^n E \left [(\mb{X}_i - E(\mb{X}_i|C_1,C_2,\mb{X}^{i-1}))(\mb{X}_i - E(\mb{X}_i|C_1,C_2,\mb{X}^{i-1}))^T \right ] \\
&= \frac{1}{n} \sum_{i=1}^n E \left [(\mb{X}_i - E(\mb{X}_i|C_1,V_i))(\mb{X}_i - E(\mb{X}_i|C_1,V_i))^T \nonumber\right ]\\
&=E \left [(\mb{X}_Q - E(\mb{X}_Q|C_1,V_Q))(\mb{X}_Q - E(\mb{X}_Q|C_1,V_Q))^T \right ] \nonumber\\
&=E \left [(\mb{X}_Q - E(\mb{X}_Q|C_1,V_Q,Q))(\mb{X}_Q - E(\mb{X}_Q|C_1,V_Q,Q))^T \right ] \nonumber\\
&=E \left [(\mb{X} - E(\mb{X}|U,V))(\mb{X} - E(\mb{X}|U,V))^T \right ] \nonumber\\
&= \mb{K}_{\mb{X}|U,V} \nonumber,
\end{align}
where (80) follows because conditioning reduces the covariance of the error in a positive semidefinite sense. \\ \\
\Large
{\textbf{Appendix C:}\hspace{0.1in} \textbf{Proof of Lemma 3}}\newline
\normalsize \\
First note that if the second encoder's message alone is sufficient to meet the distortion constraint, then $R_1 = 0$. Hence, the second encoder can use all available rate $R_2$ for the transmission of a message to the decoder and therefore $\sigma^2_{Y|V^{*}} = \sigma^2_{Y} 2^{-2 R_2}$ is optimal in this case. Consider now the other case when both encoders need to send messages to the decoder. We will first show that if we restrict the solution space of $P$ to feasible jointly Gaussian distributions, then we can assume without loss of generality that we have the long Markov chain
\begin{align*}
U \leftrightarrow \mb{X} \leftrightarrow Y  \leftrightarrow  V.
\end{align*}
It suffices to show the same for Gaussian $\bar U$ and $\bar V$ such that $\sigma^2_{Y|\bar V}$ is feasible and $\mb{K}_{\mb{X}|\bar U, \bar V}$ is the corresponding optimal solution to problem $F \left (\mb{D}, \mb{K}_{\mb{X}|\bar V} \right )$ defined in (1). Then from (34), $\left(\sigma^2_{Y|\bar V},\mb{K}_{\mb{X}|\bar U, \bar V} \right)$ is a candidate to be an optimal solution to $P_G$. From Section 4, $\mb{K}_{\mb{X}|\bar U, \bar V}$ gives two sets of directions $\mb{S}$ and $\mb{T}$ which satisfy the properties in Theorem 2. Let us define
\begin{align*}
\hat U = \mb{S}^{T} \mb{X} + \mb{W},
\end{align*}
where $\mb{W}$ is a zero-mean Gaussian random vector independent of $\mb{X}$ and has a covariance matrix $\mb{K_W}$ that satisfies
\begin{align}
\left (\mb{S}^T \mb{K}_{\mb{X}|\bar U, \bar V} \mb{S} \right )^{-1} = \left (\mb{S}^T \mb{K}_{\mb{X}|\bar V} \mb{S} \right )^{-1} + \mb{K}^{-1}_{\mb{W}}.
\end{align}
Note that $\mb{K_W}$ is strictly positive definite because
\[
\mb{S}^T \mb{K}_{\mb{X}|\bar U, \bar V} \mb{S}  \prec \mb{S}^T \mb{K}_{\mb{X}|\bar V} \mb{S}
\]
from Theorem 2(a). The conditional covariance of $\mb{X}$ given $(\hat U, \bar V)$ can be expressed as
\begin{align*}
\mb{K}_{\mb{X}|\hat U,\bar V} &=\mb{K}_{\mb{X}|\bar V} - E\left (\mb{X} \hat U^{T} | \bar V\right) \mb{K}^{-1}_{\hat U|\bar V} E\left ( \hat U \mb{X}^{T} | \bar V\right) \\
&= \mb{K}_{\mb{X}|\bar V} - \mb{K}_{\mb{X}|\bar V} \mb{S} \left(\mb{S}^T \mb{K}_{\mb{X}|\bar V} \mb{S}+\mb{K_W} \right)^{-1} \mb{S}^{T} \mb{K}_{\mb{X}|\bar V},
\end{align*}
which implies
\begin{align}
\mb{S}^T \mb{K}_{\mb{X}|\hat U,\bar V} \mb{S} &= \mb{S}^T \mb{K}_{\mb{X}|\bar V} \mb{S} - \mb{S}^T \mb{K}_{\mb{X}|\bar V} \mb{S} \left(\mb{S}^T \mb{K}_{\mb{X}|\bar V} \mb{S}+\mb{K_W} \right)^{-1} \mb{S}^T \mb{K}_{\mb{X}|\bar V} \mb{S} \nonumber\\
&= \left(\left(\mb{S}^T \mb{K}_{\mb{X}|\bar V} \mb{S}\right)^{-1}+\mb{K}^{-1}_{\mb{W}} \right)^{-1} \nonumber\\
&= \mb{S}^T \mb{K}_{\mb{X}|\bar U,\bar V} \mb{S},
\end{align}
where (82) follows from (81), and
\begin{align}
\mb{T}^T \mb{K}_{\mb{X}|\hat U,\bar V} \mb{T} &= \mb{T}^T \mb{K}_{\mb{X}|\bar V} \mb{T} - \mb{T}^T \mb{K}_{\mb{X}|\bar V} \mb{S} \left(\mb{S}^T \mb{K}_{\mb{X}|\bar V} \mb{S}+\mb{K_W} \right)^{-1} \mb{S}^T \mb{K}_{\mb{X}|\bar V} \mb{T} \nonumber\\
&= \mb{T}^T \mb{K}_{\mb{X}|\bar V} \mb{T} \\
&= \mb{I}_l \\
&= \mb{T}^T \mb{K}_{\mb{X}|\bar U,\bar V} \mb{T}, \\
\mb{T}^T \mb{K}_{\mb{X}|\hat U,\bar V} \mb{S} &= \mb{T}^T \mb{K}_{\mb{X}|\bar V} \mb{S} - \mb{T}^T \mb{K}_{\mb{X}|\bar V} \mb{S} \left(\mb{S}^T \mb{K}_{\mb{X}|\bar V} \mb{S}+\mb{K_W} \right)^{-1} \mb{S}^T \mb{K}_{\mb{X}|\bar V} \mb{S} \nonumber\\
&=\mb{0} \\
&= \mb{T}^T \mb{K}_{\mb{X}|\bar U,\bar V} \mb{S},
\end{align}
where
\begin{enumerate}
\item[(83)] and (86) follow because $\mb{S}$ and $\mb{T}$ are cross $\mb{K}_{\mb{X}|\bar V}$-orthogonal from Theorem 2(g),
\item[(84)] follows because $\mb{T}$ is $\mb{K}_{\mb{X}|\bar V}$-orthogonal from Theorem 2(e),
\item[(85)] follows because $\mb{T}$ is $\mb{K}_{\mb{X}|\bar U,\bar V}$-orthogonal from Theorem 2(c), and
\item[(87)] follows because $\mb{S}$ and $\mb{T}$ are cross $\mb{K}_{\mb{X}|\bar U,\bar V}$-orthogonal from Theorem 2(c).
\end{enumerate}
In summary, we have
\[
[\mb{S,T}]^T \mb{K}_{\mb{X}|\hat U,\bar V} [\mb{S,T}] = [\mb{S,T}]^T \mb{K}_{\mb{X}|\bar U,\bar V} [\mb{S,T}],
\]
and hence
\[
\mb{K}_{\mb{X}|\hat U,\bar V} = \mb{K}_{\mb{X}|\bar U,\bar V},
\]
because $[\mb{S,T}]$ is invertible from Theorem 2(b). This proves that we can assume without loss of generality that $\bar U$ is of the following form
\begin{align*}
\bar U &= \mb{S}^{T}\mb{X} + \mb{W}.
\end{align*}
and therefore we have the following long Markov chain
\begin{align*}
\bar U \leftrightarrow \mb{X} \leftrightarrow Y  \leftrightarrow  \bar V.
\end{align*}
Hence, without loss of generality, we can assume that any feasible solution to $P_G$, in particular $(U^{*},V^{*})$ satisfies the long Markov chain. Next, because of the second encoder's rate constraint, we have
\begin{align}
\sigma^2_{Y|V^{*}} \ge \sigma^2_{Y} 2^{-2 R_2}.
\end{align}
We want to prove equality in (88). Suppose otherwise that
\begin{align}
\sigma^2_{Y|V^{*}} > \sigma^2_{Y} 2^{-2 R_2}.
\end{align}
Then there exists a zero-mean Gaussian random variable $\tilde V$ such that for some $\epsilon > 0$, the conditional variance of $Y$ given $\tilde V$
\begin{align*}
\sigma^2_{Y|\tilde V} = \sigma^2_{Y|V^{*}} - \epsilon > \sigma^2_{Y} 2^{-2 R_2},
\end{align*}
and $U^{*}$, $\mb{X}$, $Y$, $\tilde V$ and $V^{*}$ form a Markov chain
\begin{align}
U^{*} \leftrightarrow \mb{X} \leftrightarrow Y \leftrightarrow \tilde V \leftrightarrow V^{*}.
\end{align}
We therefore have
\begin{align*}
\mb{K}_{\mb{X}|U^{*},\tilde V} &=\mb{K}_{\mb{X}|U^{*},\tilde V,V^{*}}\\
&\preccurlyeq \mb{K}_{\mb{X}|U^{*},V^{*}} \\
&\preccurlyeq \mb{D} \\
\textrm{and} \hspace{2in}\\
\mb{K}_{\mb{X}|U^{*},\tilde V} &\preccurlyeq  \mb{K}_{\mb{X}|\tilde V} \hspace{2in}\\
&= \mb{a}\mb{a}^{T} \sigma^2_{Y|\tilde V} + \mb{K_N},
\end{align*}
which means that $\left (\mb{K}_{\mb{X}|U^{*},\tilde V},\sigma^2_{Y|\tilde V}\right )$ is feasible for problem $P_G$. We now have the following chain of inequalities
\begin{align}
I(U^{*};\mb{X} |\tilde V ) &= I(U^{*};\mb{X} |\tilde V, V^{*}) \\
&= I(U^{*};\mb{X},\tilde V | V^{*}) - I(U^{*};\tilde V | V^{*}) \nonumber\\
&< I(U^{*};\mb{X},\tilde V | V^{*}) \\
&= I(U^{*};\mb{X} | V^{*})+I(U^{*};\tilde V | V^{*},\mb{X}) \nonumber\\
&= I(U^{*};\mb{X} | V^{*}),
\end{align}
where
\begin{enumerate}
\item[(91)] and (93) follows from the Markov condition in (90), and
\item[(92)] follows because
\begin{align*}
I(U^{*};\tilde V | V^{*}) &= \frac{1}{2} \log \frac{\left|\mb{K}_{U^{*}| V^{*}} \right|}{\left|\mb{K}_{U^{*}| V^{*}, \tilde V} \right|} \\
&= \frac{1}{2} \log \frac{\left|\mb{K}_{U^{*}| V^{*}} \right|}{\left|\mb{K}_{U^{*}| \tilde V} \right|} \\
&= \frac{1}{2} \log \frac{\left|\mb{S}^T \mb{K}_{\mb{X}| V^{*}} \mb{S} + \mb{K}_{\mb{W}}\right|}{\left|\mb{S}^T \mb{K}_{\mb{X}| \tilde V} \mb{S} + \mb{K}_{\mb{W}} \right|} \\
&= \frac{1}{2} \log \frac{\left|\mb{S}^T \left (\mb{a}\mb{a}^T \sigma^2_{Y|V^{*}}+\mb{K}_{\mb{N}} \right ) \mb{S} + \mb{K}_{\mb{W}}\right|}{\left|\mb{S}^T \left (\mb{a}\mb{a}^T \sigma^2_{Y|\tilde V}+\mb{K}_{\mb{N}} \right ) \mb{S} + \mb{K}_{\mb{W}} \right|}\\
&> 0.
\end{align*}
\end{enumerate}
We have arrived at a contradiction to the assumption that $U^{*}$ and $V^{*}$ are the optimal Gaussian random variables. Therefore, the supposition (89) is wrong, and hence (88) holds with equality. \\ \\
\Large
{\textbf{Appendix D:}\hspace{0.1in} \textbf{Proof of Lemma 4}}\newline
\normalsize \\
We have
\begin{align}
h\left (\mb{S}^T\mb{X} | U, V \right ) &\le \frac{1}{2} \log \left ( \left (2 \pi e \right)^r \left |\mb{S}^T\mb{K}_{\mb{X}|U,V} \mb{S} \right| \right ) \\
&\le \frac{1}{2} \log \left ( \left (2 \pi e \right)^r \left |\mb{S}^T \mb{D} \mb{S} \right| \right ),
\end{align}
where
\begin{enumerate}
\item[(94)] follows from the fact the Gaussian distribution maximizes the differential entropy for a given covariance matrix, and
\item[(95)] follows from the distortion constraint.
\end{enumerate}
The inequalities (94) and (95) are tight if $\mb{X}, U$, and $V$ are jointly with the conditional covariance matrix $\mb{K}_{\mb{X}|U,V}$ such that
\begin{align}
\mb{S}^T \mb{K}_{\mb{X}|U,V} \mb{S} = \mb{S}^T \mb{D} \mb{S}.
\end{align}
Since $\mb{K}_{\mb{X}|U^{*},V^{*}}$ satisfies (96), we conclude that a Gaussian $(U,V)$ with the conditional covariance matrix $\mb{K}_{\mb{X}|U^{*},V^{*}}$ is optimal for subproblem $\tilde P(\mb{D})$, and the optimal value is
\begin{align}
v\left (\tilde P(\mb{D}) \right ) &= h\left (\mb{S}^T\mb{X} \right ) - \frac{1}{2} \log \left ( \left (2 \pi e \right)^r \left |\mb{S}^T \mb{D} \mb{S} \right| \right ) \nonumber\\
&= \frac{1}{2} \log \left ( \left (2 \pi e \right)^r \left | \mb{S}^T\mb{K_X} \mb{S} \right | \right ) - \frac{1}{2} \log \left ( \left (2 \pi e \right)^r \left |\mb{I}_r \right| \right ) \\
&= \frac{1}{2} \log \left | \mb{S}^T\mb{K_X} \mb{S} \right | \nonumber,
\end{align}
where (97) follows because $\mb{S}$ is $\mb{D}$-orthogonal from Theorem 2(d).\\  \\
\Large
{\textbf{Appendix E:}\hspace{0.1in} \textbf{Proof of Lemma 5}}\newline
\normalsize \\
First note that if
\begin{align*}
\mb{S}^T \mb{a} = \mb{0},
\end{align*}
then
\begin{align*}
\mb{S}^T \mb{X} = \mb{S}^T \left (\mb{a}Y + \mb{N} \right )= \mb{S}^T \mb{N},
\end{align*}
which means that
\begin{align*}
v(P(R_2)) = 0,
\end{align*}
because $Y$ is independent of $\mb{N}$, and we have a Markov condition $\mb{S}^T \mb{X} \leftrightarrow Y \leftrightarrow V$. So, any $V$ including a Gaussian one with the conditional variance $\sigma^2_{Y|V^{*}}$ is optimal for subproblem $\tilde P(R_2)$. Therefore, Lemma 5 is trivially true in this case. Let us assume now that
\begin{align*}
\mb{S}^T \mb{a} \neq \mb{0},
\end{align*}
and let
\[
\mb{u}_1, \mb{u}_2, \dots, \mb{u}_r
\]
be an orthonormal basis in $\mathbb{R}^r$ starting at
\begin{align*}
\mb{u}_1 =  \frac{1}{c} \left (\mb{S}^T \mb{K_X} \mb{S} \right )^{-1/2} \mb{S}^T \mb{a},
\end{align*}
where
\[
c = \left \| \left (\mb{S}^T \mb{K_X} \mb{S} \right )^{-1/2} \mb{S}^T \mb{a} \right \|.
\]
Define the matrices
\begin{align*}
\mb{U} &= \left [\mb{u}_1, \mb{u}_2, \dots, \mb{u}_r \right ]\\
\mb{T}_{\mb{X}} &= \mb{U}^T \left (\mb{S}^T \mb{K_X} \mb{S} \right )^{-1/2},
\end{align*}
and the transformation
\[
\tilde {\mb{X}} = \mb{T}_{\mb{X}} \left (\mb{S}^T \mb{X} \right).
\]
Then the covariance matrix of $\tilde {\mb{X}}$ is
\begin{align*}
\mb{K}_{\tilde {\mb{X}}} &=  \mb{T}_{\mb{X}} \left (\mb{S}^T \mb{K_X} \mb{S} \right )\mb{T}_{\mb{X}}^T\\
&= \mb{U}^T \left (\mb{S}^T \mb{K_X} \mb{S} \right )^{-1/2} \left (\mb{S}^T \mb{K_X} \mb{S} \right ) \left (\mb{S}^T \mb{K_X} \mb{S} \right )^{-1/2} \mb{U}\\
&= \mb{U}^T \mb{U} \\
&= \mb{I}_r,
\end{align*}
and the cross-covariance matrix between $\tilde {\mb{X}}$ and $Y$ is
\begin{align*}
\mb{K}_{\tilde {\mb{X}}Y} &=\mb{T}_{\mb{X}} \mb{S}^T\mb{K}_{{\mb{X}}Y} \\
&=\mb{U}^T \left (\mb{S}^T \mb{K_X} \mb{S} \right )^{-1/2} \mb{S}^T \mb{a} \sigma^2_Y \\
&=\mb{U}^T \mb{u}_1 \left \| \left (\mb{S}^T \mb{K_X} \mb{S} \right )^{-1/2} \mb{S}^T \mb{a} \right \| \sigma^2_Y \\
&=\left (c \sigma^2_Y ,0, \dots, 0\right)^T.
\end{align*}
This means that under this transformation, $\tilde {\mb{X}}$ has i.i.d standard normal components, and $Y$ is correlated with $\tilde {{X}}_1$ only and is uncorrelated with the rest of the components of $\tilde {\mb{X}}$. Since the transformation matrix $\mb{T}_{\mb{X}}$ is full rank, we have
\begin{align}
I(\mb{S}^T \mb{X} ; V) &= I(\tilde {\mb{X}} ; V) \nonumber\\
&= I(\tilde {{X}}_1 ; V) + I(\tilde {{X}}_2, \dots, \tilde {{X}}_r ; V | \tilde {{X}}_1) \nonumber\\
&= I(\tilde {{X}}_1 ; V),
\end{align}
where (98) follows because $(\tilde {{X}}_2, \dots, \tilde {{X}}_r)$ is independent of $(V,\tilde {{X}}_1)$. It is also clear that $\tilde {{X}}_1$, $Y$ and $V$ form a Markov chain
\[
\tilde {{X}}_1 \leftrightarrow Y \leftrightarrow V.
\]
Therefore, the subproblem $\tilde P(R_2)$ is equivalent to the following problem
\begin{align*}
\max_{V} \hspace {0.1in} &I(\tilde {{X}}_1;V) \\
\textrm{subject to} \hspace{0.1in} &R_2 \ge I(Y;V) \\
&\tilde {{X}}_1 \leftrightarrow Y \leftrightarrow V,
\end{align*}
which is a well-known problem in scalar Gaussian lossy one-helper problem. Oohama in \cite{Oohama} showed that a Gaussian $V$ with the conditional variance $\sigma^2_{Y|V^{*}}$ is optimal for this problem. Hence, the same Gaussian solution is optimal for $\tilde P(R_2)$ too. Thus, the optimal solution to $\tilde P(R_2)$ is Gaussian with the conditional variance $\sigma^2_{Y|V^{*}}$ and the optimal value is
\begin{align}
v\left (\tilde P(R_2) \right) &= \frac{1}{2} \log \frac {\left | \mb{S}^T\mb{K_X} \mb{S} \right |}{\left | \mb{S}^T \left (\mb{a} \mb{a}^T  \sigma^2_{Y|V^{*}} + \mb{K_N} \right) \mb{S} \right |} \nonumber\\
&= \frac{1}{2} \log \frac {\left | \mb{S}^T\mb{K_X} \mb{S} \right |}{\left | \mb{S}^T\mb{K}_{\mb{X}|V^{*}} \mb{S} \right |},
\end{align}
where (99) follows from (37).

\end{document}